\newtheorem{theorem}{Theorem}[section]
\newtheorem{assumption}[theorem]{Assumption}
\newtheorem{corollary}[theorem]{Corollary}
\newtheorem{definition}[theorem]{Definition}
\newtheorem{lemma}[theorem]{Lemma}
\newtheorem{proposition}[theorem]{Proposition}
\newtheorem{remark}[theorem]{Remark}
\newtheorem{setting}[theorem]{Setting}
\newenvironment{proof}[1][Proof]{\textbf{#1.} }{\ \rule{0.5em}{0.5em}}
\newcommand{\R}{\mathbb{R}}
\newcommand{\EE}{\mathbb{E}}
\newcommand{\mcC}{\mathcal{C}}
\newcommand{\mcF}{\mathcal{F}}
\newcommand{\bfY}{\mathbf{Y}}
\newcommand{\bfX}{\mathbf{X}}
\newcommand{\bfx}{\mathbf{x}}
\newcommand{\bfz}{\mathbf{z}}
\newcommand{\bfZ}{\mathbf{Z}}
\newcommand{\bfw}{\mathbf{w}}
\newcommand{\bfW}{\mathbf{W}}
\newcommand{\bfy}{\mathbf{y}}
\newcommand{\bfu}{\mathbf{u}}
\newcommand{\bfe}{\mathbf{e}}
\newcommand{\bfone}{\mathbf{1}}
\newcommand{\bfzero}{\mathbf{0}}
\newcommand{\bflambda}{\mathbf{\lambda}}
\newcommand{\Af}{\square^{A}\hspace{-0.08cm}f }
\newcommand\ale[1]{{#1}}
\newcommand{\UU}{\mathbb{U}}
\newcommand{\probp}{{{P}}}
\newcommand{\norm}[1]{\left\lVert#1\right\rVert}
\newcommand\abs[1]{\left|#1\right|}
\newcommand\Ep[1]{\mathbb{E} \left[#1\right]}
\newcommand{\Linfty}{L^\infty}
\newcommand{\spU}{\square_{\pi}U}
\renewcommand\emph[1]{\textit{#1}}
\begin{document}

\author{Alessandro Doldi\thanks{Department of Mathematics, Università degli Studi di Milano, alessandro.doldi@unimi.it} ,  Marco Frittelli\thanks{Department of Mathematics, Università degli Studi di Milano, marco.frittelli@unimi.it}  and Emanuela Rosazza Gianin\thanks{Department of Statistics and Quantitative Methods, Università degli Studi di Milano-Bicocca, emanuela.rosazza1@unimib.it.
{ All authors are members of GNAMPA-INDAM}}}
%{\color{magenta}. This author is members of Gruppo Nazionale per l’Analisi Matematica, la Probabilit\`{a} e le loro Applicazioni (GNAMPA), Italy}
\title{Are Shortfall Systemic Risk Measures One Dimensional?}

\maketitle

\abstract{
\noindent Shortfall systemic (multivariate) risk measures $\rho$ defined through an $N$-dimensional multivariate utility function $U$ \ale{and random allocations} can be represented as  classical (one dimensional) shortfall risk measure\ale{s} associated to an explicitly determined $1$-dimensional function constructed from $U$. This finding allows \ale{for simplifying}  the study of several properties of $\rho$, such as dual representations,  law invariance and stability.
}
\medskip

\noindent\textbf{Keywords:} Systemic risk measures; Shortfall risk measures; Sup-convolution.

%\noindent\textbf{MSC codes:}

\section{Introduction}
We consider risky financial positions $(X^1,\dots, X^N):=\bfX$ and assume, to simplify the exposition in this introduction, that  $\bfX\in (L^{\infty}(\Omega, \mathcal F, P))^N:=(\Linfty)^N$.  We also let $\pi: (\Linfty)^N \to \Linfty$ be a pricing functional, $U:\R^N\rightarrow \R$ be a multivariate utility function and we set $\UU(\bfX):=\Ep{U(X)}$ and $\mathcal{C} := \{\bfY \in (\Linfty)^N \mid \pi(\bfY) \in \mathbb{R} \}$. Then the functional $\rho_{\pi, \UU}(\bfX) :(\Linfty)^N \to [-\infty,+\infty]$ { defined by}
\begin{equation}
\label{rhopu}
\rho_{\pi, \UU}(\bfX) := \inf \left\{ \pi(\bfY)\mid  \bfY \in \mathcal C, \, \UU(\bfX+\bfY) \geq 0  \right\}, \quad \bfX \in (\Linfty)^N
\end{equation}
represents a general capital requirement, as introduced in \cite{FrittelliScandolo06}, as well as a shortfall systemic risk measure, as extensively analysed in \cite{Drapeau}, \cite{BFFMB19} and \cite{BFFMB20}. A related, but alternative approach, based on set-valued maps, is considered in  \cite{FeinsteinRudloffWeber}. Observe that the amount $\pi(\bfY)$ is enforced to be deterministic, even though the terminal-time allocations $\bfY \in (\Linfty)^N$ are allowed to be scenario-dependent.
The map $\pi(\bfY)=\sum_{i=1}^N Y^i$ is a classical example for a pricing functional. Frequently used multivariate utility functions have the form $U(\bfx)=\sum_{i=1}^N U^i(x^i)$ for univariate utility functions $U^i:\R \to \R$, as in e.g. \cite{BFFMB20}, where also a detailed discussion on scenario-dependent allocation can be found.  A conditional version of \eqref{rhopu}  was treated in \cite{DoldiFrittelli21}. 
%We also stress that $\rho_{\pi, \UU}$ fits in the framework of general capital requirement  introduced in \cite{FrittelliScandolo06}.

In this paper, we aim at establishing whether  the functional $\rho_{\pi,\UU}$ can be reduced to a classical \emph{univariate} shortfall risk measure
\begin{equation}
\rho_{\Ep{g}}(X):= \inf \left\{ \alpha \in \R \mid  \Ep{g(X+ \alpha)} \geq 0  \right\}, \quad X \in L^{\infty},
\end{equation}
based on some function $g:\R \to \R$ that can be \emph{explicitly recovered} from $U$ and $\pi$.
As a corollary of one of our main results, Theorem \ref{thmfull}, we show that under suitable conditions on $\pi$ and $U$
\begin{equation} \label{rhopuisunivariate}
\rho_{\pi, \UU}(\bfX)=\rho_{\Ep{g}}(\pi(\bfX))={\inf \left\{ \alpha \in \R \mid  \Ep{g(\pi(\bfX))+ \alpha)} \geq 0  \right\} }, \quad \bfX \in (\Linfty)^N,
\end{equation}
for $g(x)=\square_{\pi} U(x)
 := \sup \{U(\bfw)\mid \bfw \in \R^N, \pi(\bfw)=x \}, \, x\in \R$.
Moreover, we prove that $\rho_{\pi, \UU}(\bfX)$ admits a (unique) optimum $\bfY_\bfX$, given by   $\bfY_\bfX=-\bfX+\Theta(\pi(\bfX)+\rho_{\pi, \UU}(\bfX)))$ for a continuous explicit function $\Theta: \R \to \R^N$ depending on $U$ and $\pi$.

 The fact that a particular class of shortfall systemic risk measures (based on the particular choices  $U(\bfx)=\sum_{i=1}^N U^i(x^i)$ and $\pi(\bfY)=\sum_{i=1}^N Y^i$) could be reduced to univariate ones, was already observed in \cite{BFFMB19} Proposition 5.3, \cite{BFFMB20} Proposition 3.1 (ii),  as well as in \cite{OverbeclSchindler21} Theorem 3.16.
 One main difference from our work is that we provide very explicit representations (i.e. \eqref{rhopuisunivariate} and the more general case in Theorem \ref{thmfull} below) and produce explicit formulas for the function $g$ and for the optimum.

The representation \eqref{rhopuisunivariate} is obtained as a particular instance of a more general result. Indeed,  in \eqref{rhopu} $\UU$ needs not be an expected utility, but can rather be taken to be a general (multivariate) utility functional $\UU:(\Linfty)^N\rightarrow \R$. In such a case, we prove that \eqref{rhopuisunivariate} takes the form 
\begin{equation}
    \label{functionalrho}
    \rho_{\pi, \UU}(\bfX)= \inf \left\{ \alpha \in \R \mid  \square_{\pi}\UU(\pi(\bfX)+\alpha) \geq 0  \right\},\quad \bfX \in (\Linfty)^N,
\end{equation}
where $\square_{\pi} \UU (Z) := \sup \{\UU(\bfW): \bfW \in (\Linfty)^N, \pi(\bfW)=Z \}$, $Z \in \Linfty $, is the functional counterpart of $\square_{\pi}U$. Our results cover also cases when (i) we allow for unbounded positions $\bfX$ and (ii) $\pi$ is $\R^M$-valued, for some $M\geq 1$. When $M\geq 2$, the analogue of \eqref{rhopuisunivariate}, namely \eqref{neweq1} or its {functional counterpart  \eqref{333} } in the more general case, can be seeing as a dimensionality reduction property induced by random allocations. In this case, the RHS of \eqref{rhopuisunivariate} takes the form of a shortfall type systemic risk measure with deterministic allocations, i.e. in the form \eqref{eqrhog} below, as those treated e.g. in \cite{Drapeau}.  The  case $M\geq 2$ also covers grouping examples, in which terminal-time exchanges are allowed only within certain subgropus of the whole system (see Section \ref{secgroups}).

One first application of these findings is the law invariance of  multivariate shortfall risk measures in the form \eqref{rhopu} (see Section \ref{seclawinvariance}).
 Section \ref{secstability} is devoted to establishing  a Law of Large Numbers - type result in the style of \cite{Shapiro13} for systemic shortfall risk measures in the form \eqref{rhopu}.   Our approach here is inspired by the one of \cite{BurzoniDoldiCompagnoni23}.

\section{Systemic risk measures can be reduced to univariate risk measures } \label{sec: dimens reduct}

We consider a vector subspace $L$ of $L^0(\Omega, \mathcal F, P)$, with $\R\subseteq L$, and we induce on the Cartesian product $L^N$, $N \geq 1$, the order from the standard componentwise $P$-a.s. ordering from $(L^0(\Omega, \mathcal F, P))^N$.  For $N \geq M \geq 1$, 
let $\pi=(\pi^1,\dots,\pi^M)^T: L^N \to L^M$ and set $\mathcal{C} := \{\bfY \in L^N \mid \pi(\bfY) \in \mathbb{R}^M \}$.

Given the functions $\UU:L^N \to [-\infty, + \infty)$ and $G:L^M \to [-\infty, + \infty)$ we define $\rho_{\pi, \UU}:L^N \to [-\infty, + \infty]$ and $\rho_{G}:L^M \to [-\infty, + \infty]$ by

\begin{align}
\rho_{\pi, \UU}(\bfX) &:= \inf \left\{ \sum_{m=1}^M\pi^m(\bfY)\mid \bfY \in \mathcal{C},\UU(\bfX+\bfY) \geq 0  \right\}, \quad \bfX \in L^N, \notag \\
\rho_{G}(\bfX)&:= \inf \left\{ \sum_{m=1}^M\alpha^m\mid \alpha\in\R^M ,\, G(\bfX+ \alpha) \geq 0  \right\}, \quad \bfX \in L^M. \label{eqrhog}
\end{align}

Let $\bfe^i$ be the $i$-th element of the canonical basis of $\R^N$. 
We say that $\UU:L^N \to [-\infty, + \infty)$  is strictly increasing in some component, if for any $\bfX \in L^N$ there exists some $j \in \{1,\dots,N\} $ such that the function $x \to \UU(\bfX+\mathbf{e}^jx)$, $x \in \R$, is strictly increasing. 
\begin{assumption} \label{mass: F, pi, C}.\\
\noindent a) $\pi: L^N \to L^M$ is linear and satisfies, for $L_+=L\cap L_+^0(\Omega, \mathcal F, P)$
\begin{equation}\label{mpi1}
    \pi\left(L_+^N \right)=L_+^M \quad \mbox{ and } \quad \pi(\R_+^N)=\R_+^M.
\end{equation}

\noindent b) $\UU:L^N \to \mathbb{R}$  is concave, increasing and strictly increasing in some component. 
\end{assumption}

\begin{remark}
\label{imageofRNisRM}

Observe that \eqref{mpi1} implies that $\pi$ is monotone and that
    \begin{equation}\label{mpi2}
    \pi\left(L^N \right)=L^M \quad \mbox{ and } \quad \pi(\R^N)=\R^M.
\end{equation}
Indeed, $\bfZ=\bfZ^+-\bfZ^-\in L^M$, for $\bfZ^{\pm}$ the componentwise positive and negative parts, so that $\bfZ^\pm=\pi(\bfX_\pm)$ for $\bfX_\pm\in L_+^N$ (by \eqref{mpi1}) and by linearity $\bfZ=\pi(\bfX_+-\bfX_-)$. The same works with deterministic vectors in particular, yielding the second equality.

\end{remark}
\ale{In the following we adopt the conventions $\inf\emptyset=+\infty,\sup\emptyset=-\infty$.}
\begin{definition}
 We call the function $\square_{\pi} \UU : L^M \to [-\infty,+\infty]$ defined by
\begin{equation*}
\square_{\pi} \UU (\bfZ) := \sup \{\UU(\bfW)\mid \bfW \in L^N, \pi(\bfW)=\bfZ \}, \quad \bfZ \in L^M
\end{equation*}  
the \emph{sup-convolution of $\UU$ under $\pi$}.
\end{definition}
In Section \ref{secproofprop} the analogous concept, defined for \emph{convex} functions $f$, is there named \emph{image function of $f$ under $\pi$}, a terminology mutuated from \cite{Ro70}. Our choice is motivated by the following observation. Take $M=1$, suppose that $\UU^i:L \to [-\infty,\infty)$, $i=1,\dots,N, $ are $N$ given univariate utility functions, consider the multivariate utility $\UU:L^N \to [-\infty,\infty)$ defined by $\UU(\bfW)=\sum_{i=1}^N\UU^i(W^i) $ and the functional $\pi:L^N \to L$ given by the sum of the components, namely $\pi(\bfW)=\sum_{i=1}^N W^i$. Then by computing the sup-convolution $\UU^1 \square \dots \square \UU^N$ of the functions $\UU^i$ we get $$ (\UU^1 \square \dots \square \UU^N) (Z)=\square_{\pi} \UU(Z), \quad Z \in L. $$ 

Observe that in case $M=1$, $\rho_{(\square_{\pi}\UU)}$ is a \emph{ classical (univariate) risk measure}.
One first interesting finding is that \emph{any} systemic risk measure in the form $\rho_{\pi, \UU}(\bfX)$ can be written as a univariate risk measure $\rho_{(\square_{\pi}\UU)}(\pi(\bfX))$ associated to the sup-convolution $\square_{\pi}\UU$, namely to an explicitly determined univariate function. For $1 \leq M <N $ we analogously obtain a reduction in dimensionality, as explicitly described in the following proposition, whose proof is in Section \ref{secprop3}.
\begin{proposition} \label{prop3}
 Suppose that Assumption \ref{mass: F, pi, C} holds true and that $\square_{\pi} \UU (\bfZ) < + \infty$ for every $\bfZ \in L^M$. Then
 \begin{enumerate}
     \item \label{propItem1} The functional $\square_{\pi} \UU $ is finite valued, concave and increasing on $L^M$.
     \item \label{propItem2} If $\bfX \in L^N$ satisfies $\sup \{\UU(\bfX+\bfy) \mid \bfy \in \mathbb R ^N \}>0$, then

 \begin{equation}\label{333}
\rho_{\pi, \UU}(\bfX)=\rho_{(\square_{\pi}\UU)}(\pi(\bfX)).     
 \end{equation}
 \end{enumerate}
 \end{proposition}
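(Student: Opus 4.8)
The plan is to prove the two statements in order, deriving Item \ref{propItem2} from Item \ref{propItem1} together with a careful unpacking of the definitions. For Item \ref{propItem1}, the key observations are: (a) $\square_\pi \UU$ is everywhere $> -\infty$ because, given $\bfZ \in L^M$, surjectivity of $\pi$ from \eqref{mpi2} in Remark \ref{imageofRNisRM} guarantees that the feasible set $\{\bfW \in L^N \mid \pi(\bfW) = \bfZ\}$ is nonempty, so the supremum is taken over a nonempty set and dominates the (finite) value $\UU(\bfW_0)$ for any feasible $\bfW_0$; combined with the standing hypothesis $\square_\pi \UU(\bfZ) < +\infty$, this gives finiteness. (b) Concavity: if $\bfW_1, \bfW_2$ are feasible for $\bfZ_1, \bfZ_2$ respectively, then by linearity of $\pi$ the convex combination $\lambda \bfW_1 + (1-\lambda)\bfW_2$ is feasible for $\lambda \bfZ_1 + (1-\lambda)\bfZ_2$, and concavity of $\UU$ gives the desired inequality after taking suprema. (c) Monotonicity: if $\bfZ_1 \leq \bfZ_2$ in $L^M$, write $\bfZ_2 - \bfZ_1 \in L_+^M$; by \eqref{mpi1} there is $\bfV \in L_+^N$ with $\pi(\bfV) = \bfZ_2 - \bfZ_1$, so any feasible $\bfW$ for $\bfZ_1$ yields a feasible $\bfW + \bfV$ for $\bfZ_2$ with $\UU(\bfW + \bfV) \geq \UU(\bfW)$ since $\UU$ is increasing, and taking suprema finishes it.

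For Item \ref{propItem2}, I would prove the equality \eqref{333} by establishing the two inequalities separately. For ``$\leq$'': take any $\alpha \in \R^M$ with $\square_\pi\UU(\pi(\bfX) + \alpha) \geq 0$; by definition of the sup-convolution and the fact that $\square_\pi\UU$ is finite (Item \ref{propItem1}), one can find $\bfW \in L^N$ with $\pi(\bfW) = \pi(\bfX) + \alpha$ and $\UU(\bfW)$ arbitrarily close to $\square_\pi\UU(\pi(\bfX)+\alpha)$ — but $\geq 0$ exactly is what we need, so here I would use the additional slack from the assumption $\sup\{\UU(\bfX + \bfy) \mid \bfy \in \R^N\} > 0$ and the strict monotonicity in some component to push the value of $\UU$ above $0$ while keeping $\pi(\bfW)$ fixed (adjusting along a component $\bfe^j x$ that leaves $\pi$ unchanged is not generally possible, so instead one perturbs $\alpha$ slightly and uses continuity/monotonicity; the cleaner route is: since $\square_\pi\UU$ is finite concave increasing and $\sup\{\square_\pi\UU(\pi(\bfX)+\alpha) : \alpha\} > 0$ by the hypothesis, the infimum defining $\rho_{(\square_\pi\UU)}(\pi(\bfX))$ can be computed with a genuinely feasible $\bfW$). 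Setting $\bfY := \bfW - \bfX$, we get $\pi(\bfY) = \alpha \in \R^M$, hence $\bfY \in \mcC$, and $\UU(\bfX + \bfY) = \UU(\bfW) \geq 0$, so $\bfY$ is feasible for $\rho_{\pi,\UU}(\bfX)$ with $\sum_m \pi^m(\bfY) = \sum_m \alpha^m$; taking infima gives $\rho_{\pi,\UU}(\bfX) \leq \rho_{(\square_\pi\UU)}(\pi(\bfX))$.

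For the reverse inequality ``$\geq$'': take any $\bfY \in \mcC$ with $\UU(\bfX + \bfY) \geq 0$, and set $\alpha := \pi(\bfY) \in \R^M$. Then $\pi(\bfX + \bfY) = \pi(\bfX) + \alpha$, so $\bfX + \bfY$ is feasible in the sup-convolution defining $\square_\pi\UU(\pi(\bfX) + \alpha)$, whence $\square_\pi\UU(\pi(\bfX) + \alpha) \geq \UU(\bfX+\bfY) \geq 0$; therefore $\alpha$ is admissible for $\rho_{(\square_\pi\UU)}(\pi(\bfX))$ with objective value $\sum_m \alpha^m = \sum_m \pi^m(\bfY)$, and taking infima over $\bfY$ yields $\rho_{(\square_\pi\UU)}(\pi(\bfX)) \leq \rho_{\pi,\UU}(\bfX)$. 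The main obstacle I anticipate is the ``$\leq$'' direction: the supremum in $\square_\pi\UU$ need not be attained, so one cannot directly extract a feasible $\bfW$ achieving $\UU(\bfW) \geq 0$ from $\square_\pi\UU(\pi(\bfX)+\alpha) \geq 0$ alone — this is precisely where the hypothesis $\sup\{\UU(\bfX + \bfy) \mid \bfy \in \R^N\} > 0$ enters, allowing a strictly-positive-value point to be mixed in (using concavity) to absorb the approximation gap, and I would spell out this mixing argument carefully rather than wave at it.
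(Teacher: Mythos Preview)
Your proof of Item~\ref{propItem1} is essentially identical to the paper's: finiteness from surjectivity \eqref{mpi2}, concavity from linearity of $\pi$ plus concavity of $\UU$, and monotonicity from \eqref{mpi1} plus monotonicity of $\UU$.

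For Item~\ref{propItem2} your argument is correct but takes a genuinely different route. You split into two inequalities; the ``$\geq$'' direction is immediate (and the paper's chain of equalities contains it implicitly), while for ``$\leq$'' you propose a \emph{mixing argument}: given $\alpha$ with $\square_\pi\UU(\pi(\bfX)+\alpha)\geq 0$, take near-optimal $\bfW_n$ with $\pi(\bfW_n)=\pi(\bfX)+\alpha$, take $\bfW_0=\bfX+\bfy_0$ with $\UU(\bfW_0)>0$ from the hypothesis, and use concavity of $\UU$ on the convex combination $\lambda\bfW_n+(1-\lambda)\bfW_0$ to produce a point with $\UU>0$ and $\pi$-value $\pi(\bfX)+\lambda\alpha+(1-\lambda)\pi(\bfy_0)$, whose cost tends to $\sum_m\alpha^m$ as $\lambda\to 1$. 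This works. (Your writeup is slightly inconsistent --- after ``perturbing $\alpha$'' you still write $\pi(\bfY)=\alpha$ --- but the idea is sound, and notably it uses only concavity and the hypothesis $\sup\{\UU(\bfX+\bfy)\}>0$, not the strict monotonicity.)

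The paper instead proves a chain of equalities by passing through \emph{strict} inequalities. It first shows (equality \eqref{m111}) that $\rho_{\pi,\UU}(\bfX)$ is unchanged if the constraint $\UU(\bfX+\bfY)\geq 0$ is replaced by $\UU(\bfX+\bfY)>0$, using strict monotonicity in some component to perturb any feasible $\bfY$ at arbitrarily small cost. The point of this is that ``$\exists\,\bfW$ with $\pi(\bfW)=\pi(\bfX)+\alpha$ and $\UU(\bfW)>0$'' is \emph{tautologically equivalent} to ``$\square_\pi\UU(\pi(\bfX)+\alpha)>0$'', so the passage to the sup-convolution becomes trivial and the attainment problem you identified disappears. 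The paper then closes the loop (equality \eqref{m222}) by showing the infimum over $\{\alpha:\square_\pi\UU(\pi(\bfX)+\alpha)>0\}$ equals that over $\{\alpha:\square_\pi\UU(\pi(\bfX)+\alpha)\geq 0\}$, using that $\beta\mapsto\square_\pi\UU(\pi(\bfX)+\alpha+\beta\bfone)$ is concave increasing with supremum $>0$ (here the hypothesis is used). The paper's route is arguably cleaner because it isolates the approximation issue into two one-line perturbation arguments, whereas your mixing needs a two-parameter limit ($n\to\infty$, $\lambda\to 1$); on the other hand, your approach is more direct and avoids invoking strict monotonicity.
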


\begin{remark}
\label{remsuponlinfty}    The assumption in Item \ref{propItem2} of Proposition \ref{prop3} is automatic if $L=\Linfty$ and $\sup \{\UU(\bfy) \mid \bfy \in \mathbb R ^N \}>0$, by monotonicity of $\UU$. 
\end{remark}

\begin{remark} \label{remrem}   
Recall the general notion of a capital requirement $\rho_{\pi, \mathcal{A}}: L^N \to [-\infty, + \infty]$ (see \cite{FrittelliScandolo06}) and of a monetary risk measure $\rho_{\mathcal{B}}: L \to [-\infty, + \infty]$ (see  \cite{FollmerSchied2}) :
\begin{align}\label{rhorho}
\rho_{\pi, \mathcal{A}}(\bfX) 
&:= \inf \left\{ \pi(\bfY) \mid \bfY \in \mathcal{C},\bfX+\bfY \in \mathcal{A} \right\}, \quad \bfX \in L^N,\\
\rho_{\mathcal B}(Z) 
&:= \inf \left\{\alpha \in \mathbb R \mid Z+\alpha \in \mathcal{B}\right\}, \quad Z \in L, \label{rho1}
\end{align}
for some acceptance sets $\mathcal{A} \subseteq L^N$ and $\mathcal{B} \subseteq L$. 
If $\pi$ is linear and for $\mathcal{C} := \{\bfY \in L^N \mid \pi(\bfY) \in \mathbb{R} \}$ we have for all $\bfX \in L^N$
\begin{eqnarray*}
\rho_{\pi, \mathcal{A}}(\bfX)&=& \inf \left\{\alpha \in \mathbb{R}\mid \pi(\bfY)= \alpha  ,\bfX+\bfY \in \mathcal{A} \right\} \\
&=& \inf \left\{\alpha \in \mathbb{R} \mid \exists \bfW \in \mathcal{A} \mbox{ s.t. }\pi(\bfW)= \pi(\bfX)+\alpha \right\} \\
&=& \inf \left\{\alpha \in \mathbb{R} \mid \pi(\bfX)+\alpha \in \pi(\mathcal{A}) \right\}=\rho_{\pi(\mathcal{A})}(\pi(\bfX)). 
\end{eqnarray*}
Hence, any capital requirement (or systemic multivariate risk measure) of dimension $N$ in the form \eqref{rhorho} with $\pi$ \emph{linear} can be reduced to a classical univariate risk measure in the form \eqref{rho1}.
\end{remark}

\textbf{In the remaining of this section we work in the following}

\begin{setting}
\label{settingbase}
$\,$

\begin{enumerate}
\item We select $L=L^{\infty}$ 
\item \label{pequalsax}The linear functional $\pi: (L^{\infty})^N \to (L^{\infty})^M $ is assigned by $$\pi(\mathbf{X})=A\mathbf{X},$$ where the \ale{(deterministic)} matrix $A$ in $\R^{M\times N}$ satisfies $A(\R^N_+)=\R^M_+$.
\item The multivariate utility function $U:\R^N\rightarrow \R$ is nondecreasing (w.r.t. the componentwise order), differentiable, strictly concave throughout all $\R^N$ with $\sup \{U(\bfy) \mid \bfy \in \mathbb R ^N \}>0$.
\item \label{settingItem4}The functional $\UU:(L^{\infty})^N \to \mathbb{R}$ has the form 
\begin{equation*}
\UU(\bfX):=\EE[U(\bfX)], \text{   } \bfX \in (L^{\infty})^N.
\end{equation*}
\end{enumerate}
\end{setting}
We point out that $A(\R^N_+)=\R^M_+$ implies that $A$ has full rank, $\mathrm{rank}(A)=M$, and that $\pi$ in Item \ref{pequalsax} satisfies $\pi((\Linfty)^N_+)=(\Linfty)^M_+$. Moreover, the function $\UU$ in Item \ref{settingItem4} is also strictly increasing (in any component). Thus, in the Setting \ref{settingbase} the Assumption \ref{mass: F, pi, C} holds true.
The choices made in the Setting \ref{settingbase} lead to the classical shortfall systemic risk measure:
\begin{equation}\label{eqA}
\rho_{\pi, \UU}(\bfX) := \inf \left\{  \sum_{m=1}^M\pi^m(\bfY)\mid \bfY \in \mathcal{C},\Ep{U(\bfX+\bfY)} \geq 0  \right\}, \quad \bfX \in (L^{\infty})^N,
\end{equation}
which is a monotone increasing, convex, cash additive map. 
% Observe that $\UU:(L^{\infty})^N \to \mathbb{R}$   satisfies the Assumptions \ref{mass: F, pi, C} (b) and $(-U):\R^N\rightarrow \R$ satisfies the Assumption \ref{basica}.
By definition, the sup-convolution $\square_{\pi} U: \R^M \to \R $  of $U$ under $\pi$ is assigned by:  
\begin{equation}\label{111}
\square_{\pi}U (\bfy) := \sup \{U(\bfx)\mid \bfx \in \R^N, \pi(\bfx)=\bfy \}, \quad \bfy \in \R^M.
\end{equation} 

\begin{assumption}
\label{assexput}
  For some $\bfy\in\R^M$,  the problem in \eqref{111}  admits an optimum, namely there exists $\bfx=\bfx(\bfy)\in \R^N$ such that $\pi(\bfx)=\bfy$ and $\square_{\pi} U(\bfy)=U(\bfx)$.  
\end{assumption}
We provide in Lemma \ref{lemmaexistsoptu} in Appendix mild conditions which guarantee the validity of Assumption \ref{assexput}.

The main result of this note is described in the following Theorem. Shortfall systemic risk measures $\rho_{\pi, \UU}(\bfX)$ defined through a $N$-dimensional multivariate utility function $U$ can be represented as a shortfall risk measure $\rho_{(\Ep{\square_{\pi} U})}(\pi(\bfX)) $ associated to the $M$-dimensional function $\square_{\pi} U$. Additionally, we provide the explicit formula for the optimum. The proof is deferred to Section \ref{sec:proffmainthm}.

\begin{theorem}
\label{thmfull}
Suppose that Assumption \ref{assexput} is satisfied. Then
\begin{enumerate}
    \item \label{ThItem1}

   $ \square_{\pi} \UU(\bfZ)=\Ep{(\square_{\pi}U)(\bfZ)}$  for every $\bfZ \in (L^{\infty})^M$.
\item \label{ThItem2} For every $\bfX \in (\Linfty)^N$ we have 
 \begin{equation}
\rho_{\pi, \UU}(\bfX)=\rho_{(\Ep{\square_{\pi} U})}(\pi(\bfX)):= \inf \left\{ \sum_{m=1}^M\alpha^m \mid \alpha \in \R^M, \, \Ep{\square_{\pi}U(\pi(\bfX)+\alpha)} \geq 0  \right\}.\label{neweq1}
 \end{equation}
If additionally there exists an optimum $\widehat{\alpha}\in \R^M$ attaining  the infimum in RHS of \eqref{neweq1},  then also $\rho_{\pi, \UU}(\bfX)$ admits a (unique) optimum $\bfY_\bfX$, given by
    \begin{equation}
        \label{formula:opt1U}
        \bfY_\bfX=-\bfX-\nabla U^*\Big(\pi^T \cdot \nabla (\square_{\pi}U)(\pi(\bfX)+\widehat{\alpha})\Big)=-\bfX+\Theta\Big(\pi(\bfX)+\widehat{\alpha}\Big),
    \end{equation} 
where \ale{$U^*$ is the concave conjugate of $U$, $\pi^T $ is the transposed map of $\pi$ and} $\Theta:\R^M \rightarrow \R^N $, $\Theta(\bfy):=-\nabla U^*\Big(\pi^T \cdot \nabla (\square_{\pi}U)(\bfy)\Big)$, is continuous.
\end{enumerate}
In case $M=1$,  
$ \rho_{\pi, \UU} $ is finite valued, the optimum in the RHS of \eqref{neweq1} always exists, and it is given by $\widehat{\alpha}=\rho_{(\Ep{\square_{\pi} U})}(\pi(\bfX))$.
\end{theorem}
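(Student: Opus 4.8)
First I would record the convex-analytic properties of the sup-convolution $\square_{\pi} U$ of \eqref{111} that drive everything. In Setting \ref{settingbase}: $\square_{\pi} U$ is finite, nondecreasing and concave on $\R^M$ (this is Proposition \ref{prop3}\ref{propItem1}, since Assumption \ref{mass: F, pi, C} holds there — finiteness of $\square_{\pi}\UU$, which is needed to invoke that proposition, being the trivial half of Item \ref{ThItem1}); it is in fact \emph{strictly} concave and differentiable; and the supremum in \eqref{111} is attained at every $\bfy\in\R^M$ by a unique point $\Theta(\bfy)=-\nabla U^*\big(\pi^T\cdot\nabla(\square_{\pi}U)(\bfy)\big)$, with $\Theta:\R^M\to\R^N$ continuous. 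Uniqueness of the maximizer and strict concavity both come from strict concavity of $U$: a segment on which $\square_{\pi}U$ were affine would, via maximizers at its endpoints (which have distinct images under $\pi$), be strictly dominated by $U$ on the corresponding segment in $\R^N$. Everywhere-attainment is bootstrapped from Assumption \ref{assexput} (attainment at \emph{one} $\bfy$) using that attainment of $\sup\{U(\bfx):\pi(\bfx)=\bfy\}$ is equivalent to $\{\mathbf{d}:U^\infty(\mathbf{d})\ge 0\}\cap\ker\pi=\{\bfzero\}$ for the recession function $U^\infty$, a condition not involving $\bfy$; this, and the continuity of the argmax, is where Lemma \ref{lemmaexistsoptu} and the image-function calculus of Section \ref{secproofprop} enter. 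Differentiability of $\square_{\pi}U$ follows since its concave conjugate is $U^*\circ\pi^T$, which is essentially strictly concave because $U$ is differentiable and $\pi^T$ is injective ($\pi=A$ has full rank $M$); the formula for $\Theta$ comes from the first-order condition $\nabla U(\Theta(\bfy))\in\mathrm{range}(\pi^T)$ with multiplier $\nabla(\square_{\pi}U)(\bfy)$ (envelope theorem) together with the inversion relating $\nabla U$ and $\nabla U^*$.

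Item \ref{ThItem1} is then a measurable-selection argument. The inequality ``$\le$'' holds since $U(\bfW)\le(\square_{\pi}U)(\pi(\bfW))=(\square_{\pi}U)(\bfZ)$ pointwise for any $\bfW\in(\Linfty)^N$ with $\pi(\bfW)=\bfZ$, hence in expectation; for ``$\ge$'' one plugs in $\bfW:=\Theta(\bfZ)$, which lies in $(\Linfty)^N$ by continuity of $\Theta$ and boundedness of $\bfZ$, satisfies $\pi(\bfW)=\bfZ$ and $U(\bfW)=(\square_{\pi}U)(\bfZ)$ pointwise, so $\EE[U(\bfW)]=\Ep{(\square_{\pi}U)(\bfZ)}$. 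The first equality in Item \ref{ThItem2} follows by combining Item \ref{ThItem1} with Proposition \ref{prop3}\ref{propItem2}: the hypothesis $\sup\{\UU(\bfX+\bfy):\bfy\in\R^N\}>0$ holds by Remark \ref{remsuponlinfty} and $\sup_\bfy U(\bfy)>0$, so Proposition \ref{prop3}\ref{propItem2} gives $\rho_{\pi,\UU}(\bfX)=\rho_{(\square_{\pi}\UU)}(\pi(\bfX))$, and Item \ref{ThItem1} identifies the functional $\square_{\pi}\UU$ on $(\Linfty)^M$ with $\bfZ\mapsto\Ep{(\square_{\pi}U)(\bfZ)}$.

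For the optimum \eqref{formula:opt1U}, given a minimizer $\widehat\alpha$ of the RHS of \eqref{neweq1}, I set $\bfY_\bfX:=-\bfX+\Theta(\pi(\bfX)+\widehat\alpha)$: it lies in $(\Linfty)^N$, has $\pi(\bfY_\bfX)=\widehat\alpha\in\R^M$, $\EE[U(\bfX+\bfY_\bfX)]=\Ep{(\square_{\pi}U)(\pi(\bfX)+\widehat\alpha)}\ge 0$, and cost $\sum_m\pi^m(\bfY_\bfX)=\sum_m\widehat\alpha^m=\rho_{\pi,\UU}(\bfX)$, so it is an optimum. For uniqueness, let $\bfY'$ be any optimum (so $\rho_{\pi,\UU}(\bfX)\in\R$) and put $\beta:=\pi(\bfY')\in\R^M$. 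Replacing $\bfY'$ by $\bfY'-t\mathbf{d}$ with $\mathbf{d}\in\R^N_+$, $\pi(\mathbf{d})\in\R^M_+\setminus\{\bfzero\}$, and using continuity of $U$, forces $\EE[U(\bfX+\bfY')]=0$; hence $\beta$ is feasible for the RHS of \eqref{neweq1} at the optimal cost, so it is a minimizer there, and an analogous perturbation in $\beta$ gives $\Ep{(\square_{\pi}U)(\pi(\bfX)+\beta)}=0$. Then $\Ep{(\square_{\pi}U)(\pi(\bfX+\bfY'))-U(\bfX+\bfY')}=0$ with nonnegative integrand forces $U(\bfX+\bfY')=(\square_{\pi}U)(\pi(\bfX+\bfY'))$ a.s., whence $\bfX+\bfY'=\Theta(\pi(\bfX)+\beta)$ a.s. by uniqueness of the maximizer; finally $\beta=\widehat\alpha$, since otherwise the midpoint $\tfrac12(\beta+\widehat\alpha)$ would keep cost $\rho_{\pi,\UU}(\bfX)$ but, by strict concavity of $\square_{\pi}U$, strictly satisfy the constraint, contradicting optimality. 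Thus $\bfY'=\bfY_\bfX$.

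Finally, for $M=1$ the map $\alpha\mapsto\Ep{(\square_{\pi}U)(\pi(\bfX)+\alpha)}$ is continuous, nondecreasing and concave, with limit $\sup_\bfy U(\bfy)>0$ as $\alpha\to+\infty$ (monotone convergence) and limit $\lim_{s\to-\infty}(\square_{\pi}U)(s)=-\infty$ as $\alpha\to-\infty$ (strict concavity makes $\square_{\pi}U$ unbounded below on $\R$, and $(\square_{\pi}U)(\pi(\bfX)+\alpha)\le(\square_{\pi}U)(\norm{\pi(\bfX)}_\infty+\alpha)$), so $\{\alpha\in\R:\Ep{(\square_{\pi}U)(\pi(\bfX)+\alpha)}\ge 0\}$ is a nonempty closed half-line $[\widehat\alpha,+\infty)$; this yields finiteness of $\rho_{\pi,\UU}(\bfX)$ and $\widehat\alpha=\rho_{(\Ep{\square_{\pi} U})}(\pi(\bfX))$. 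The hard part is the preliminary step — everywhere-attainment with a unique continuous maximizer $\Theta$ and the strict concavity and differentiability of $\square_{\pi}U$; once those are in hand the rest is bookkeeping, the one delicate point being the final identification $\beta=\widehat\alpha$ in the uniqueness argument.
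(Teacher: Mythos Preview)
Your proof is correct and follows the paper's architecture: invoke Proposition~\ref{prop:explicit formula} for the properties of $\square_\pi U$ and the continuous maximizer $\Theta$, establish Item~\ref{ThItem1} by the pointwise bound together with the feasible choice $\bfW=\Theta(\bfZ)$, and then feed this into Proposition~\ref{prop3}\ref{propItem2} for \eqref{neweq1}; the optimum $\bfY_\bfX$ is verified exactly as in the paper.

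Two places where you diverge are worth flagging. First, for uniqueness the paper simply says ``by strict concavity of $\UU$'' (a midpoint-plus-perturbation argument in $(\Linfty)^N$), whereas you go further and also prove that the minimizer $\widehat\alpha$ of the reduced problem is itself unique, via strict concavity of $\square_\pi U$ on $\R^M$; your route is longer but yields the extra information $\beta=\widehat\alpha$. Second, for the $M=1$ finiteness the paper argues $\rho_{\pi,\UU}(\bfX)>-\infty$ by producing $\lambda$ with $A^T\lambda\in\mathrm{int\,dom}(f^*)$ (via Proposition~\ref{prop:KuhnTucker}) and applying the Fenchel inequality to a minimizing sequence; your argument is more elementary and self-contained: a strictly concave nondecreasing function on $\R$ has strictly decreasing, eventually positive, derivative and is therefore unbounded below, so $\Ep{(\square_\pi U)(\pi(\bfX)+\alpha)}\to-\infty$ as $\alpha\to-\infty$. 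Both approaches are valid; yours avoids invoking the conjugate altogether. Finally, your recession-cone sketch for ``attainment at one $\bfy$ $\Rightarrow$ at all $\bfy$'' is a legitimate alternative to the paper's Kuhn--Tucker route (Propositions~\ref{prop:KuhnTucker}--\ref{foralliffforsome}), though the paper's path has the advantage of simultaneously delivering the Lagrange multiplier $\bflambda(\bfy)=\nabla(\square_\pi U)(\bfy)$ needed for the explicit formula \eqref{formula:opt1U}.
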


% In the particular case $M=1$ such existence needs not be assumed additionally in Theorem \ref{thmfull} being an easy exercised based on Dominated Convergence Theorem.
{In the case $M=1$, as an immediate byproduct of Theorem \ref{thmfull}, 
%we mention that in view of \cite{FollmerSchied2} Theorem 4.115 (setting $\ell(x)=-\square _{\pi}U(-x),x\in\R$) 
the \emph{dual representation} for the systemic risk measure $\rho_{\pi,\UU}(=\rho_{(\Ep{\square_{\pi} U})})$ can be directly obtained from the well known classical dual representation of the univariate convex risk measure $\rho_{(\Ep{\square_{\pi} U})}$. Indeed, letting $\ell(x)=-\square _{\pi}U(-x),x\in\R$ the dual representation of $\rho_{(\Ep{\square_{\pi} U})}$ follows from \cite{FollmerSchied2} Theorem 4.115 with minimal penalty function $\alpha^\mathrm{min}$ in \cite{FollmerSchied2} Theorem 4.115 explicitly given, since $\ell^*(z)=-U^*(A^Tz), z\in\R$. }

 \subsection{Grouping case}
 \label{secgroups}
 
 In Lemma \ref{lemmagroups} below, whose simple proof is omitted, we show how the dimensionality reduction put in evidence in \eqref{neweq1} covers also the grouping case in Example 5.2 of \cite{BFFMB20} and Definition 5.1 of \cite{DoldiFrittelli21}. More precisely, we show that with an appropriate choice of $\pi$ we get:
 \begin{equation} 
  \label{eqgroups}
\mathcal{C} := \{\bfY \in L^N \mid \pi(\bfY) \in \mathbb{R}^M \}=\left\{\bfY\in(\Linfty)^N\mid \sum_{n\in I_m}Y^n\in\R\, \quad \forall\,m=1,\dots,M\right\}. 
  \end{equation}
 
\begin{lemma}
\label{lemmagroups}
  Let $I_1,\dots,I_M\subseteq \{1,\dots, N\}$ be a partition of $\{1,\dots,N\}$, clearly with $M\leq N$. Define the matrix $A=(a_{mn})_{mn}\in\R^{M\times N}$ via
  $$
  a_{mn}= 
  \begin{cases}
1&\text{ if } n\in I_m\\
0&\text{ otherwise}
  \end{cases}\quad\quad n=1,\dots, N;m=1,\dots, M.
  $$  
  Furthermore, set $L=\Linfty$ and $\pi(\bfX)=A\bfX$ (as a matrix-vector product). Then $A$ is full rank, the first Item in Assumption \ref{mass: F, pi, C} and the second Item in Setting \ref{settingbase} are satisfied, and  \eqref{eqgroups} holds.

\end{lemma}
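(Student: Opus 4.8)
The plan is to verify the four assertions one at a time; each reduces to an elementary property of the $0$--$1$ incidence matrix of a partition, so nothing here is delicate. First I would record the structural facts: since $I_1,\dots,I_M$ partition $\{1,\dots,N\}$ (so in particular $M\le N$ and every block is nonempty), each column of $A$ contains exactly one nonzero entry, namely a $1$; each row $m$ is a nonzero vector supported exactly on $I_m$; and distinct rows have disjoint supports $I_m\cap I_{m'}=\emptyset$. Linear independence of the rows, and hence $\mathrm{rank}(A)=M$, is then immediate.

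Next I would treat the positivity conditions needed for Item 2 of Setting \ref{settingbase} and for the first Item of Assumption \ref{mass: F, pi, C}. The inclusions $A(\R^N_+)\subseteq\R^M_+$ and $\pi((\Linfty)^N_+)\subseteq(\Linfty)^M_+$ are trivial since $A$ has nonnegative entries, and every component $(A\bfX)_m=\sum_{n\in I_m}X^n$ is a finite sum of nonnegative ($\Linfty$) terms. For the reverse inclusions the key idea is to build a preimage concentrated on one index per block: fix $n_m\in I_m$ for each $m$ (possible because the blocks are nonempty, and the $n_m$ are automatically distinct because the blocks are disjoint); then given $\bfy\in\R^M_+$ set $x^{n_m}:=y^m$ and $x^n:=0$ otherwise, so that $(A\bfx)_m=\sum_{n\in I_m}x^n=y^m$, i.e. $A\bfx=\bfy$ with $\bfx\ge 0$. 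The same recipe carried out with $L^\infty$-valued entries gives $\pi((\Linfty)^N_+)=(\Linfty)^M_+$, and linearity of $\pi(\bfX)=A\bfX$ is clear from the definition of matrix-vector multiplication.

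Finally, for \eqref{eqgroups} I would simply compute the components: the $m$-th entry of $\pi(\bfY)=A\bfY$ equals $\sum_{n=1}^N a_{mn}Y^n=\sum_{n\in I_m}Y^n$. Hence $\pi(\bfY)\in\R^M$ — meaning all $M$ components of $\pi(\bfY)$ are ($P$-a.s.) real constants — holds if and only if $\sum_{n\in I_m}Y^n\in\R$ for every $m=1,\dots,M$, which is exactly the description of the right-hand side of \eqref{eqgroups}. There is no genuine obstacle in any of this; the only point that must not be overlooked is the non-emptiness of the partition blocks, which is used both to select the indices $n_m$ in the surjectivity arguments and to obtain linear independence of the rows of $A$.
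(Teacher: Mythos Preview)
Your proof is correct; all four assertions are elementary consequences of the partition structure, and your argument covers them cleanly. The paper itself omits the proof of Lemma \ref{lemmagroups} (stating only that it is simple), so there is no alternative approach to compare against; your verification is exactly the kind of routine check the authors evidently had in mind.
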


\section{Applications}

\subsection{Law invariance}\label{seclawinvariance}

We use the same notation of Section \ref{sec: dimens reduct}, we write $P_X$ (resp. $P_\bfX$) for the law of a random variable $X$ (resp. vector $\bfX$) on $\R$ (resp. $\R^N$) and $X\overset{P}{\sim}Y$ if the random variables (or vectors) $X,Y$ have the same law under $P$.

\begin{proposition}\label{propLI}
Assume that $\pi(\bfX)\overset{P}{\sim}\pi(\bfY)$ with $\bfX,\bfY \in L^{N}$. Then
\begin{enumerate}
    \item[(1)] If $\pi(\mathcal{A})$ is law invariant then $\rho_{\pi, \mathcal{A}}(\bfX)=\rho_{\pi, \mathcal{A}}(\bfY)$.
    \item[(2)] If \eqref{333} holds and if $\mathcal{B}=\{\bfZ \in L^M \mid \square_{\pi} \UU (\bfZ) \geq 0\}$ is law invariant then $\rho_{\pi, \UU}(\bfX)=\rho_{\pi, \UU}(\bfY)$.
\end{enumerate}

\end{proposition}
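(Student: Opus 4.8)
The plan is to reduce both claims to the elementary principle that a translation-type risk measure associated with a \emph{law-invariant} acceptance set is itself law invariant, exploiting the fact --- recorded in Remark \ref{remrem} for part (1) and in \eqref{333} for part (2) --- that $\rho_{\pi,\mathcal{A}}$ and $\rho_{\pi,\UU}$ depend on $\bfX$ only through $\pi(\bfX)$.

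For part (1), the first step is to write down the $M$-dimensional analogue of the chain of equalities in Remark \ref{remrem}. Since $\pi$ is linear and $\mathcal{C}=\{\bfY\in L^N\mid \pi(\bfY)\in\R^M\}$, the substitution $\bfW:=\bfX+\bfY$ sets up a value-preserving bijection between $\{\bfY\in\mathcal{C}\mid \bfX+\bfY\in\mathcal{A}\}$ and $\{\bfW\in\mathcal{A}\mid \pi(\bfW)-\pi(\bfX)\in\R^M\}$, along which $\sum_{m=1}^M\pi^m(\bfY)=\sum_{m=1}^M\alpha^m$ with $\alpha:=\pi(\bfW)-\pi(\bfX)$; hence
\[
\rho_{\pi,\mathcal{A}}(\bfX)=\inf\Big\{\textstyle\sum_{m=1}^M\alpha^m\;\Big|\;\alpha\in\R^M,\ \pi(\bfX)+\alpha\in\pi(\mathcal{A})\Big\},\qquad \bfX\in L^N.
\]
Next, from $\pi(\bfX)\overset{P}{\sim}\pi(\bfY)$ and the fact that $\alpha\in\R^M$ is deterministic, I obtain $\pi(\bfX)+\alpha\overset{P}{\sim}\pi(\bfY)+\alpha$ for every $\alpha$. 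Law invariance of $\pi(\mathcal{A})\subseteq L^M$ then gives $\pi(\bfX)+\alpha\in\pi(\mathcal{A})\iff\pi(\bfY)+\alpha\in\pi(\mathcal{A})$, so the feasible sets of $\alpha$'s in the displayed formula coincide for $\bfX$ and $\bfY$ and the infima are equal.

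For part (2), I would first note that by definition \eqref{eqrhog} the map $\rho_{(\square_{\pi}\UU)}$ is exactly the risk measure on $L^M$ associated with $\mathcal{B}=\{\bfZ\in L^M\mid \square_{\pi}\UU(\bfZ)\geq0\}$, namely $\rho_{(\square_{\pi}\UU)}(\bfZ)=\inf\{\sum_{m=1}^M\alpha^m\mid\alpha\in\R^M,\ \bfZ+\alpha\in\mathcal{B}\}$. Applying \eqref{333} to $\bfX$ and to $\bfY$, it is enough to show $\rho_{(\square_{\pi}\UU)}(\pi(\bfX))=\rho_{(\square_{\pi}\UU)}(\pi(\bfY))$, which is the very same comparison of infima as in part (1): $\pi(\bfX)+\alpha\overset{P}{\sim}\pi(\bfY)+\alpha$ for each deterministic $\alpha\in\R^M$, and law invariance of $\mathcal{B}$ forces $\{\alpha\mid\pi(\bfX)+\alpha\in\mathcal{B}\}=\{\alpha\mid\pi(\bfY)+\alpha\in\mathcal{B}\}$.

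I do not expect a genuine obstacle. The two points that deserve a line of care are (i) the $M$-dimensional version of the reduction of Remark \ref{remrem}, which needs only linearity of $\pi$ and the change of variables $\bfW=\bfX+\bfY$ (no concavity or monotonicity of $\UU$), and (ii) making precise that ``law invariant'' for a subset of $L^M$ means stability under replacing a random vector by any other with the same $\R^M$-law, so that a deterministic translation leaves us simultaneously inside or outside the set for $\pi(\bfX)$ and for $\pi(\bfY)$. Once these are fixed, each identity is immediate from ``infimum over the same set''.
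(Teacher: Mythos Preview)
Your proof is correct and follows essentially the same approach as the paper: both parts reduce to the observation that $\rho_{\pi,\mathcal{A}}$ (resp.\ $\rho_{\pi,\UU}$) depends on $\bfX$ only through $\pi(\bfX)$ via Remark \ref{remrem} (resp.\ \eqref{333}), after which law invariance of the relevant acceptance set forces the feasible $\alpha$-sets for $\pi(\bfX)$ and $\pi(\bfY)$ to coincide. You are slightly more explicit than the paper in spelling out the $M$-dimensional version of the reduction in Remark \ref{remrem}, but the arguments are otherwise identical.
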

\begin{proof}
Item (1) is an immediate consequence of $\rho_{\pi, \mathcal{A}}(\bfX)=\rho_{\pi(\mathcal{A})}(\pi(\bfX))$ proven in Remark \ref{remrem}. 
%From $\pi(\bfX)\overset{P}{\sim}\pi(\bfY)$ and the law invariance of $\pi(\mathcal{A})$ we get: $\pi(\bfX)+\alpha \in \pi(\mathcal{A})$ iff $\pi(\bfY)+\alpha \in \pi(\mathcal{A})$. Therefore, $\rho_{\pi, \UU}(\bfX)=\rho_{\pi, \UU}(\bfY)$.
(2) From $\pi(\bfX)\overset{P}{\sim}\pi(\bfY)$ and the law invariance of $\mathcal{B}$ we get: $\square_{\pi} \UU (\pi(\bfX) + \alpha) \geq 0$ iff $\square_{\pi} \UU (\pi(\bfY) + \alpha) \geq 0$.
From $\rho_{\pi, \UU}(\bfX)=\rho_{(\square_{\pi}\UU)}(\pi(\bfX))$, we deduce that $\rho_{\pi, \UU}(\bfX)=\rho_{\pi, \UU}(\bfY)$.
\end{proof}
\begin{remark}  
Obviously, if $\pi$ is law invariant and $\bfX\overset{P}{\sim}\bfY$ then the assumption in the previous proposition holds, so that Proposition \ref{propLI} gives in particular sufficient conditions for the law invariance of the systemic risk measures $\rho_{\pi, \mathcal{A}}$ and $\rho_{\pi, \UU}$. \\ 
\end{remark}
\begin{corollary}
\label{corlawinv}
Let $U:\R^N\rightarrow \R$ be nondecreasing, differentiable, strictly concave throughout all $\R^N$ and satisfying $\sup \{U(\bfy) \mid \bfy \in \mathbb R ^N \}>0$ and Assumption \ref{assexput}. Then $\rho$ defined by
\begin{equation}
\label{eqsums}
\rho(\bfX) := \inf \left\{ \sum_{i=1}^N Y^i \mid \bfY \in (L^{\infty})^N, \, \sum_{i=1}^N Y^i \in \mathbb R, \, \Ep{U(\bfX+\bfY)} \geq 0  \right\}, \quad \bfX \in (L^{\infty})^N,
\end{equation}  
is finite valued and law invariant.
\end{corollary}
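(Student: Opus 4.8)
The plan is to derive Corollary \ref{corlawinv} as a direct specialization of Theorem \ref{thmfull} (for the finiteness) combined with Proposition \ref{propLI}(2) (for law invariance), after verifying that the hypotheses of both results are met in the present setting. First I would observe that the functional $\rho$ in \eqref{eqsums} is exactly $\rho_{\pi,\UU}$ from \eqref{eqA} in the special case $M=1$, $\pi(\bfX)=A\bfX$ with $A=(1,1,\dots,1)\in\R^{1\times N}$, and $\UU(\bfX)=\Ep{U(\bfX)}$. It is immediate that $A(\R^N_+)=\R_+=\R^M_+$, so we are in Setting \ref{settingbase}; the hypotheses on $U$ (nondecreasing, differentiable, strictly concave on $\R^N$, $\sup\{U(\bfy)\mid\bfy\in\R^N\}>0$) are assumed in the statement, and Assumption \ref{assexput} is assumed as well. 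Hence Theorem \ref{thmfull} applies, and its final clause (the $M=1$ case) yields directly that $\rho=\rho_{\pi,\UU}$ is finite valued, with moreover $\rho_{\pi,\UU}(\bfX)=\rho_{(\Ep{\square_{\pi}U})}(\pi(\bfX))$, i.e. \eqref{333} holds here.

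Next I would establish law invariance via Proposition \ref{propLI}(2). Since $\pi(\bfX)=\sum_{i=1}^N X^i$ is a measurable function of $\bfX$ applied pointwise (indeed a linear, hence Borel, map $\R^N\to\R$), whenever $\bfX\overset{P}{\sim}\bfY$ we get $\pi(\bfX)\overset{P}{\sim}\pi(\bfY)$, so the standing hypothesis of Proposition \ref{propLI} is satisfied. By Item \ref{ThItem1} of Theorem \ref{thmfull} we have $\square_\pi\UU(\bfZ)=\Ep{(\square_\pi U)(\bfZ)}$, and $\square_\pi U:\R\to\R$ is a fixed deterministic function (finite valued by Proposition \ref{prop3}\ref{propItem1}). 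Therefore the acceptance set $\mathcal B=\{Z\in L^\infty\mid \Ep{(\square_\pi U)(Z)}\geq 0\}$ depends on $Z$ only through its law $P_Z$ — if $Z\overset{P}{\sim}Z'$ then $(\square_\pi U)(Z)\overset{P}{\sim}(\square_\pi U)(Z')$ and in particular the two expectations coincide (they are well defined in $[-\infty,+\infty)$ since $\square_\pi U$ is concave and bounded above by $\square_\pi U(\esssup Z)$ on the range of the bounded $Z$) — so $\mathcal B$ is law invariant. Proposition \ref{propLI}(2) then gives $\rho_{\pi,\UU}(\bfX)=\rho_{\pi,\UU}(\bfY)$ whenever $\bfX\overset{P}{\sim}\bfY$ (using $\pi(\bfX)\overset{P}{\sim}\pi(\bfY)$), which is precisely law invariance of $\rho$.

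The only genuinely delicate point is confirming that the expectation $\Ep{(\square_\pi U)(Z)}$ defining $\mathcal B$ is always a well-defined element of $[-\infty,+\infty)$, so that the acceptance set makes sense and the "iff'' step in Proposition \ref{propLI}(2) is legitimate; this follows because $Z\in L^\infty$ and $\square_\pi U$ is finite, concave (hence continuous and bounded above on the compact interval $[\essinf Z,\esssup Z]$, being majorized by an affine function there), so $(\square_\pi U)(Z)$ is bounded above by a constant, and its negative part may fail to be integrable only in the direction of $-\infty$, which is harmless. Everything else is bookkeeping: matching the definitions, checking $A=(1,\dots,1)$ satisfies $A(\R^N_+)=\R_+$, and invoking the already-proved finiteness from the $M=1$ clause of Theorem \ref{thmfull}. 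I would write the proof in three short lines: (i) identify $\rho$ with $\rho_{\pi,\UU}$ and invoke Theorem \ref{thmfull} for finiteness and \eqref{333}; (ii) note $\bfX\overset{P}{\sim}\bfY\Rightarrow\pi(\bfX)\overset{P}{\sim}\pi(\bfY)$ and that $\mathcal B$ is law invariant by Item \ref{ThItem1}; (iii) apply Proposition \ref{propLI}(2).
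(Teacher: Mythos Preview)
Your proposal is correct and follows essentially the same route as the paper: identify $\rho$ with $\rho_{\pi,\UU}$ in Setting~\ref{settingbase} with $M=1$ and $A=(1,\dots,1)$, invoke Theorem~\ref{thmfull} for finiteness and for $\square_\pi\UU(Z)=\Ep{\square_\pi U(Z)}$, check that $\mathcal B$ is law invariant, and conclude via Proposition~\ref{propLI}(2). The only minor difference is that the paper justifies well-definedness of $\Ep{\square_\pi U(Z)}$ by citing continuity of $\square_\pi U$ from Proposition~\ref{prop:explicit formula} (so $(\square_\pi U)(Z)\in L^\infty$ outright), whereas you argue via concavity and an affine majorant; both are fine, but the continuity route is shorter.
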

\begin{proof}
Observe that $\rho=\rho_{\pi, \UU}(\bfX) $ for  $\UU(\cdot):=\EE[U(\cdot)]$ and $\pi(\bfY):= Y_1+...+Y_N$, the latter being law invariant. Thus the assumptions in the Setting \ref{settingbase} hold. By Proposition \ref{prop:explicit formula}, $\, \square_\pi U$ is continuous on $\mathbb R$ and, by Theorem \ref{thmfull}, Item \ref{ThItem1},  $\square_{\pi} \UU (Z)= \Ep{ \square_\pi U(Z)}$. \ale{Thus,  $\mathcal{B}:=\{Z \in L^{\infty} \mid \square_{\pi} \UU (Z) \geq 0\}= \{ Z \in L^{\infty} \mid \Ep{\square_\pi U(Z)}\geq 0\}$}  is law invariant. The conclusion follows from Theorem \ref{thmfull} and Proposition \ref{propLI} (2).
\end{proof}

\subsection{Stability}
\label{secstability}
Let $(\Omega,\mcF,\probp)$ be an atomless standard probability space. If $\rho:(\Linfty)^N\rightarrow \R$ is a law invariant functional, then it is possible to think of $\rho$ as defined on the class of probability measures on $\R^N$. Indeed, whenever  $P_\bfX=P_\bfY$ on $\R^N$ we have $\bfX\overset{P}{\sim}\bfY$ and $\rho(X)=\rho(Y)$, and since the underlying space is non atomic every probability measure on $\R^N$ is realized as the law under $P$ of some random vector $\bfZ$ defined on $\Omega$ by the Skorokhod Theorem. By a slight abuse of notation we write $\rho(Q)$ meaning $\rho(\bfZ)$ for every $\bfZ$ having law  $Q$ on $\R^N$.
\begin{corollary}
\label{corcontonQ}
In Setup \ref{settingbase} with $M=1$ and $A\bfx=\sum_{j=1}^Nx^j$ for every $\bfx\in\R^N$, suppose that the assumptions of Corollary \ref{corlawinv} are satisfied. Take probability measures $\{Q_n\}_n$ on $\R^N$, for $n=1,\dots,+\infty$, such that for some positive radius $r>0$ we have 
$Q_n(B_r)=1$ for every $1\leq n\leq +\infty$, $B_r$ being the ball of radius $r$ in $\R^N$. Suppose additionally that $Q_n$ converges to $Q_\infty$ in the weak sense for probability measures. Then $\rho$ defined in  \eqref{eqsums} satisfies
$$\lim_{n\rightarrow +\infty}\rho(Q_n)=\rho(Q_\infty).$$
\end{corollary}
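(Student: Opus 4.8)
The plan is to reduce everything to the univariate shortfall risk measure $\rho_{\Ep{g}}$ with $g=\square_\pi U$, and then to prove continuity of $Q\mapsto\rho(Q)$ directly from the defining infimum using weak convergence together with uniform boundedness of the supports. By Corollary \ref{corlawinv} (whose hypotheses we are assuming) and Theorem \ref{thmfull}, Item \ref{ThItem2}, in the present setting (with $M=1$ and $A\bfx=\sum_j x^j$) we have, for any $\bfX$ with law $Q$ on $\R^N$,
\begin{equation*}
\rho(Q)=\rho_{(\Ep{g})}\big(\textstyle\sum_{j=1}^N X^j\big)=\inf\Big\{\alpha\in\R\mid \Ep{g\big(\textstyle\sum_{j=1}^N X^j+\alpha\big)}\geq 0\Big\},\qquad g=\square_\pi U.
\end{equation*}
By Proposition \ref{prop:explicit formula} (invoked in the proof of Corollary \ref{corlawinv}) $g$ is continuous on $\R$, and by Proposition \ref{prop3}, Item \ref{propItem1} it is concave and increasing; moreover $\sup_{x}g(x)=\sup_{\bfy}U(\bfy)>0$, so $\rho$ is real-valued. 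The key reduction is: if $\Phi:\R^N\to\R$ denotes the (continuous, hence Borel) map $\Phi(\bfy)=\sum_j y^j$, then $\rho(Q)$ depends on $Q$ only through the push-forward $\widetilde Q:=\Phi_\#Q$, a probability measure on $\R$ supported in the interval $[-\sqrt N\,r,\sqrt N\,r]=:[-R,R]$; and since $\Phi$ is continuous, $Q_n\Rightarrow Q_\infty$ weakly implies $\widetilde Q_n\Rightarrow\widetilde Q_\infty$ weakly (continuous mapping theorem for weak convergence). So it suffices to prove: if $\mu_n\Rightarrow\mu_\infty$ are probability measures on $\R$ all supported in $[-R,R]$, then $h(\mu_n)\to h(\mu_\infty)$, where $h(\mu):=\inf\{\alpha\in\R\mid \int_\R g(x+\alpha)\,\mu(\mathrm dx)\geq 0\}$.

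For that final one-dimensional statement I would argue as follows. Define $\psi_\mu(\alpha):=\int_\R g(x+\alpha)\,\mu(\mathrm dx)$. Since $g$ is continuous and increasing, $\psi_\mu$ is continuous and nondecreasing in $\alpha$; since $\sup g>0$ and $g$ is increasing, $\psi_\mu(\alpha)\to\sup g>0$ as $\alpha\to+\infty$ and $\psi_\mu(\alpha)\to\inf g$ (possibly $-\infty$, but in any case $<0$ for $\alpha$ small, because $\psi_\mu(\alpha)\le g(R+\alpha)\to\inf g$; one only needs $g$ to take a negative value somewhere, which holds since a concave increasing nonconstant function with positive supremum must be negative far enough to the left — if $g$ were bounded below by $0$ then $\rho\equiv-\infty$, contradicting finiteness). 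Hence $h(\mu)$ is finite and is the unique solution (or the left endpoint of the solution set) of $\psi_\mu(\alpha)=0$; in particular $\psi_\mu(h(\mu))=0$ by continuity. Now fix $\alpha\in\R$. On the compact set $[-R+\alpha-\epsilon,\ R+\alpha+\epsilon]$ the function $g$ is bounded and (uniformly) continuous; hence $x\mapsto g(x+\alpha)$ restricted to $[-R,R]$ is a fixed bounded continuous function, so $\psi_{\mu_n}(\alpha)=\int g(x+\alpha)\mu_n(\mathrm dx)\to \int g(x+\alpha)\mu_\infty(\mathrm dx)=\psi_{\mu_\infty}(\alpha)$ by the definition of weak convergence. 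Thus $\psi_{\mu_n}\to\psi_{\mu_\infty}$ pointwise; moreover all $\psi_{\mu_n}$ are monotone, so by Dini-type reasoning the convergence is locally uniform, and the limit $\psi_{\mu_\infty}$ is continuous and strictly crosses $0$ (strict crossing because $g$ is strictly increasing on a neighborhood of any crossing point — strict concavity plus differentiability and monotonicity of $U$ forces $g=\square_\pi U$ to be strictly increasing, which I would record as a small lemma or extract from Proposition \ref{prop:explicit formula}). From pointwise (indeed locally uniform) convergence of the monotone functions $\psi_{\mu_n}$ to the continuous strictly increasing-through-zero function $\psi_{\mu_\infty}$, a standard argument gives convergence of the zeros: for every $\epsilon>0$, $\psi_{\mu_\infty}(h(\mu_\infty)-\epsilon)<0<\psi_{\mu_\infty}(h(\mu_\infty)+\epsilon)$, so for $n$ large $\psi_{\mu_n}(h(\mu_\infty)-\epsilon)<0<\psi_{\mu_n}(h(\mu_\infty)+\epsilon)$, whence $h(\mu_\infty)-\epsilon\le h(\mu_n)\le h(\mu_\infty)+\epsilon$. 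Therefore $h(\mu_n)\to h(\mu_\infty)$, and unwinding the reductions, $\rho(Q_n)\to\rho(Q_\infty)$.

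The main obstacle I anticipate is making the passage to the limit inside the integral fully rigorous and uniform enough to transfer it into convergence of the implicitly-defined infima. Pointwise weak convergence of $\psi_{\mu_n}(\alpha)$ for each fixed $\alpha$ is immediate because $g(\cdot+\alpha)$ is a fixed bounded continuous test function on the common compact support $[-R,R]$; the slightly delicate point is that $h$ is defined as an infimum, so one must rule out the risk function jumping — this is exactly where strict monotonicity of $g$ near the crossing point (equivalently, that the level sets of $\psi_{\mu_\infty}$ at $0$ are singletons) is used, and where I would invoke the uniform support bound to keep $\psi_{\mu_n}$ equibounded and equi-nondecreasing so that their pointwise limit is automatically continuous. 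Everything else (finiteness of $\rho$, continuity of $g$, the reduction through $\Phi$) is bookkeeping using Proposition \ref{prop3}, Proposition \ref{prop:explicit formula}, Corollary \ref{corlawinv} and Theorem \ref{thmfull} as already granted.
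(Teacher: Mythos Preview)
Your proposal is correct, and it follows a genuinely different route from the paper. Both arguments begin with the same reduction, via Theorem~\ref{thmfull}, to the univariate shortfall $\rho_{(\Ep{\square_\pi U})}$ applied to $\pi(\bfX)$. From there, the paper invokes the Skorokhod Representation Theorem to produce random vectors $\bfZ_n\to\bfZ_\infty$ $P$-a.s.\ with the prescribed laws, and then concludes by citing known structural results for univariate shortfall risk measures from F\"ollmer--Schied: continuity from below (their Proposition~4.113) upgraded to the Lebesgue property (their Corollary~4.35), which immediately gives $\rho_{(\Ep{\square_\pi U})}(\pi(\bfZ_n))\to\rho_{(\Ep{\square_\pi U})}(\pi(\bfZ_\infty))$ for uniformly bounded, a.s.\ convergent arguments. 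Your approach instead stays at the level of measures: you push forward through $\Phi(\bfx)=\sum_j x^j$, use the continuous mapping theorem, and then argue directly that the map $\alpha\mapsto\psi_\mu(\alpha)=\int g(x+\alpha)\,\mu(\mathrm dx)$ converges pointwise under weak convergence of compactly supported $\mu_n$ (since $g(\cdot+\alpha)$ is a bounded continuous test function on the common support), and finally transfer this to convergence of the zero of $\psi_\mu$ via strict monotonicity of $g=\square_\pi U$ (which you correctly extract from Proposition~\ref{prop:explicit formula}).

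The paper's proof is shorter and more conceptual, packaging the limiting step into a single citation, but it relies on two external results and on the Skorokhod machinery. Your argument is more elementary and self-contained: it never leaves the level of weak convergence of measures and requires no coupling construction. The price is that you must verify by hand the strict monotonicity of $g$ and run the convergence-of-zeros argument explicitly, but both of these are routine once set up. Either approach is perfectly adequate here; yours has the minor advantage of making transparent exactly where the compact-support hypothesis is used (boundedness of the test function $g(\cdot+\alpha)$ on $[-R,R]$).
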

\begin{proof}
By the Skorokhod Representation Theorem there exist $N$-dimensional random vectors $\bfZ_n, 1\leq n\leq +\infty$ on $(\Omega,\mcF,P)$ such that $Q_n$ is the law of $\bfZ_n$ under $P$ and $\bfZ_n\rightarrow \bfZ_\infty$ $P$-a.s. In particular then $\bfZ_n\in (\Linfty)^N$, $\pi(\bfZ_n)\rightarrow \pi(\bfZ_\infty)$ and $\norm{\pi(\bfZ_n)}_\infty\leq Nr\,P-$a.s. for every $n$.  Then for $1\leq n\leq +\infty$ we have $\rho(Q_n)=\rho(\bfZ_n)=\rho_{(\Ep{\square_{\pi} U})}(\pi(\bfZ_n))$ by Theorem \ref{thmfull}, where $\square_\pi U:\R\rightarrow \R$ is increasing and nonconstant by Proposition \ref{prop:explicit formula}. We now show that $\rho_{(\Ep{\square_{\pi} U})}(\pi(\bfZ_n))\rightarrow_n\rho_{(\Ep{\square_{\pi} U})}(\pi(\bfZ_\infty))$. By \cite{FollmerSchied2}  Proposition 4.113 $\rho_{(\Ep{\square_{\pi} U})}$ is continuous from below. Then it has the Lebesgue property (\cite{FollmerSchied2}  Corollary 4.35) and the desired convergence follows.
\end{proof}

Take now $\bfX\in (\Linfty)^N$. Replacing  the law $P_\bfX$ in $\rho(P_\bfX)$ with the empirical
measure $\widehat{P}_n$ based on an i.i.d. sample $(\bfX_1,\dots, \bfX_n)$, we obtain the empirical estimate\slash historical estimate 
 $\rho(\widehat{P}_n)$ of $\rho(P_\bfX)=\rho(\bfX)$. 
Under the assumptions of Corollary \ref{corcontonQ}, we have in particular $\lim_n\rho(\widehat{P}_n)=\rho(P_\bfX)$ $P-$a.s. by weak convergence (a.s.) of $\widehat{P}_n$ to $P_\bfX$. This can be exploited in conjunction with the explicit formula for the optima \eqref{formula:opt1U}, since in the case $M=1$ we know $\widehat{\alpha}=\rho(\bfX)$, to guarantee a.s. convergence of the approximated optimal allocation functions $\bfx\mapsto -\bfx+\Theta(\pi(\bfx)+\rho(\widehat{P}_n))$.

\section{Proof of Proposition \ref{prop3}} \label{secprop3}

\textbf{Proof of Item \ref{propItem1}}. 
The functional $\square_{\pi} \UU$ is finite valued since, by \eqref{mpi2}, $\square_{\pi} \UU(\bfZ) >- \infty$ for any $\bfZ \in L^M$.
% Assumption \ref{ass: F, pi, C}- b) then guarantees that $\square_{\pi} F$ is finite.

\textit{Concavity.} By \eqref{mpi2}, given any $\bfZ_1, \bfZ_2 \in L^M$, there exist $\bfW_1, \bfW_2 \in L^N$ such that $\pi(\bfW_i)=\bfZ_i$ for $i=1,2$. 
Hence, for any $\alpha \in [0,1]$
\begin{equation*}
\square_{\pi} \UU (\alpha \bfZ_1 + (1- \alpha) \bfZ_2)  \geq \UU(\alpha \bfW_1 + (1-\alpha) \bfW_2) 
\geq \alpha \UU( \bfW_1) + (1-\alpha) \UU(\bfW_2), 
\end{equation*}
where the former inequality is due to the definition of $\square_{\pi} \UU$ and the linearity of $\pi$, the latter from concavity of $\UU$.
Concavity of $\square_{\pi} \UU$ then follows by taking the supremum over all $\bfW_1, \bfW_2 \in L^N$ such that $\pi(\bfW_1)=\bfZ_1$ and $\pi(\bfW_2)=\bfZ_2$.

\textit{Monotonicity.} 
Take $\bfZ_i \in L^M$ such that $\bfZ_1 \leq \bfZ_2$ and take by \eqref{mpi2} $\bfW_1\in L^N$ s.t. $\pi(\bfW_1)=\bfZ_1$.
Now $\bfZ_2-\bfZ_1\in L_+^M$, so that $\bfZ_2-\bfZ_1=\pi(\bfW)$ for some $\bfW\in (L_+)^N$ by \eqref{mpi1}. Hence, $\bfW_2:=\bfW_1+\bfW\geq \bfW_1$ satisfies $\pi(\bfW_2)=\bfZ_2$ and \ale{$  \UU(\bfW_1) \leq \UU(\bfW_2) \leq \square_{\pi}\UU(\bfZ_2).$}
Take now a supremum over $\bfW_1$ satisfying $\pi(\bfW_1)=\bfZ_1$ to get $\square_{\pi}\UU(\bfZ_1)\leq \square_{\pi}\UU(\bfZ_2)$.
%\end{proof}
\\
\textbf{Proof of Item \ref{propItem2}}. \ale{Observe first that under the additional assumption in Item \ref{propItem2}, we have $\rho_{\pi, \UU}(\bfX)<+\infty$.}
From the linearity of $\pi$ and the definition of $\mathcal C$, we have for any $\bfX \in L^N $
\begin{align}
\rho_{\pi, \UU}(\bfX)&=\inf \left\{ \sum_{m=1}^M\pi^m(\bfY)\mid \bfY \in \mathcal{C},\UU(\bfX+\bfY) \geq 0  \right\} \notag \\ \label{m111}
&= \inf \left\{ \sum_{m=1}^M\pi^m(\bfY)\mid \bfY \in \mathcal{C},\UU(\bfX+\bfY) > 0  \right\} \\ 
&= \inf \left\{ \sum_{m=1}^M\alpha^m\mid \alpha\in \R^M\ale{\text{ satisfies } \, \pi (\bfY)=\alpha   \text{ for some }\bfY\in L^N},\UU(\bfX+\bfY) > 0  \right\} \notag \\
%&= \inf \left\{ \sum_{m=1}^M\alpha^m\mid \alpha\in \R^M,\, \exists \bfW \in L^N, \UU(\bfW)>0   \mbox{ s.t. %}\pi(\bfW-\bfX)= \alpha \right\}
%\notag \\ \label{m333}
%\end{align}
%\begin{align}
&= \inf \left\{\sum_{m=1}^M\alpha^m\mid \alpha\in \R^M,\, \exists \bfW \in L^N, \UU(\bfW)>0 \mbox{ s.t. }\pi(\bfW)= \pi(\bfX)+\alpha \right\} \notag \\ \label{m222}
&= \inf \left\{ \sum_{m=1}^M\alpha^m\mid \alpha\in \R^M,\, \sup \left \{{\UU(\bfW) \mid \bfW \in L^N , \pi(\bfW)= \pi(\bfX)+\alpha } \right \}>0 \right\} \notag \\
&= \inf \left\{ \sum_{m=1}^M\alpha^m\mid \alpha\in \R^M,\, \square_{\pi}\UU( \pi(\bfX)+\alpha)>0  \right\} \notag \\ 
&= \inf \left\{ \sum_{m=1}^M\alpha^m\mid \alpha\in \R^M,\, \square_{\pi}\UU( \pi(\bfX)+\alpha) \geq 0  \right\}  \\
&=\rho_{(\square_{\pi}\UU)}(\pi(\bfX)) \notag 
\end{align}
where only the equalities \eqref{m111}  and \eqref{m222} are not evident. 
To prove the equality in \eqref{m111}, observe first that, obviously, $$\rho_{\pi, \UU}(\bfX) \leq \inf \left\{ \sum_{m=1}^M\pi^m(\bfY)\mid \bfY \in \mathcal{C},\UU(\bfX+\bfY) > 0  \right\}:=a$$ \ale{with $a\in\R$.}  Suppose by contradiction that $\rho_{\pi, \UU}(\bfX)<a$ and let $0<\varepsilon<\frac{a-\rho_{\pi, \UU}(\bfX)}{2
}  $. Then there exists $\bfY \in \mathcal{C}$ such that  $\UU(\bfX+\bfY) \geq 0$ and 
\begin{equation}
\label{mpropertY}
    \sum_{m=1}^M\pi^m(\bfY) < \rho_{\pi, \UU}(\bfX)+\varepsilon.
\end{equation}
  Since $\UU$ is strictly increasing on one component, say component $i$, take $\widehat{\bfY}:=\bfY+\varepsilon \frac{\bfe^i}{\sum_{m=1}^M\pi^m(\bfe^i)+1}$, noticing that this is well defined as $\pi^m(\bfe^i)\geq 0$, for all $m$, by \eqref{mpi1}. Since $\pi$ is linear and $\pi(\mathbb R^N) \subseteq \R^M$ (Remark \ref{imageofRNisRM}) then $\sum_{m=1}^M\pi^m(\widehat{\bfY})=\left (\sum_{m=1}^M\pi^m(\bfY) +\varepsilon \frac{\sum_{m=1}^M\pi^m(\bfe^i)}{\sum_{m=1}^M\pi^m(\bfe^i)+1} \right ) \in \R$ and $\widehat \bfY \in \mathcal{C}$. Moreover,  $\UU(\bfX+\widehat{\bfY})=\UU(\bfX+\bfY+\varepsilon \frac{\bfe^i}{\sum_{m=1}^M\pi^m(\bfe^i)+1})>\UU(\bfX+\bfY) \geq 0$ so that $a \leq \sum_{m=1}^M\pi^m(\widehat{\bfY})$. But this is a contradiction using \eqref{mpropertY}: $a \leq \sum_{m=1}^M\pi^m(\widehat{\bfY})=\sum_{m=1}^M\pi^m(\bfY) +\varepsilon \frac{\sum_{m=1}^M\pi^m(\bfe^i)}{\sum_{m=1}^M\pi^m(\bfe^i)+1}<\rho_{\pi, \UU}(\bfX)+2\varepsilon <a$.
% In \eqref{m333} we exploit the linearity of $\pi$.
To prove the equality in \eqref{m222}, we set
\begin{equation*}
g_{\bfX}(\alpha) := \square_{\pi} \UU(\pi(\bfX) + \alpha), \quad \alpha \in \R^M 
\end{equation*}
and show that
 \begin{equation*}
\inf \left\{\sum_{m=1}^M\alpha^m \mid\alpha \in \R^M,\, g_{\bfX} (\alpha)>0 \right\}=\inf \left\{\sum_{m=1}^M\alpha^m\mid \alpha\in  \R^M,\, g_{\bfX} (\alpha) \geq 0 \right\}.
\end{equation*}
The fact that LHS$\geq$RHS is clear, \ale{and the equality is trivial if the set in RHS is empty. Then we assume this is not the case and prove LHS$\leq$RHS.} Take a minimizing sequence $(\alpha_n)_n$ such that $g_{\bfX} (\alpha_n) \geq 0$ for each $n$ and $\sum_{m=1}^M\alpha_n^m\downarrow_n\inf \{\sum_{m=1}^M\alpha^m\mid \alpha\in  \R^M,\, g_{\bfX} (\alpha) \geq 0 \}$.

\textbf{Case 1:}
 $(\alpha_n)_n$ admits a subsequence $(\alpha_{n_k})_k$ with $g_{\bfX} (\alpha_{n_k}) > 0$ for each $k$. Clearly we get LHS$\leq \sum_{m=1}^M\alpha_{n_k}^m\downarrow_k$RHS, which is the desired remaining inequality. 
 
 \textbf{Case 2:} $g_{\bfX} (\alpha_n) = 0$ definitely in $n$.
  We assume the equality holds for each $n$ w.l.o.g.. 
Define now the functions $h_n(\beta):=g_{\bfX}(\alpha_n+\beta \bfone)$, $\beta \in \R$, and observe that Proposition \ref{prop3} Item \ref{propItem1} ensures that, for each $n$,  $h_n:\R\rightarrow \R$ is increasing and concave on $\R$ and thus continuous. Moreover, $$g_\bfX(\alpha)\leq h_n\left(\max_m\abs{\alpha^n_m}+\max_m\abs{\alpha_m}\right),\quad \forall\,\alpha\in\R^M$$
guaranteeing that $\sup_{\beta\in\R}h_n(\beta)=\sup_{\alpha\in\R^M}g_\bfX(\alpha):=\widehat g $. By the linearity of $\pi$ we then get 
\begin{align}
\sup_{\beta\in\R}h_n(\beta)&=\sup_{\alpha \in \mathbb{R}^M} g_{\bfX} (\alpha)
=  \sup_{\alpha \in \mathbb{R}^M }  \left \{\sup \{\UU(\bfW) \mid \bfW \in L^N, \pi(\bfW)=\pi (\bfX)+\alpha \} \right \} \notag \\
% &=  \sup_{\alpha \in \mathbb{R}^M }  \left \{\sup \{\UU(\bfW) : \bfW \in L^N, \pi(\bfW-\bfX)=\alpha \} \right \} \label{ma3} \\
&=  \sup_{\alpha \in \mathbb{R}^M }  \left \{\sup \{\UU(\bfX+\bfY) \mid \bfY \in \mcC, \pi(\bfY)=\alpha \} \right \} \notag \\
&=    \sup \{\UU(\bfX+\bfY) \mid \bfY \in \mcC \} \label{ma1} \\
&\geq    \sup \{\UU(\bfX+\bfy) \mid \bfy \in \mathbb R ^N \}>0, \label{ma2}
\end{align}
where we used: 
%in \eqref{ma3} the linearity of $\pi$;
in \eqref{ma1} the equality \ale{$ \pi(\mcC)= \mathbb R^M$} (a consequence of \eqref{mpi2}); in the first inequality in \eqref{ma2}  the fact that $\pi (\mathbb R ^N) \subseteq \mathbb R^M$, and the  last strict inequality is guaranteed \ale{by hypothesis}.
Observe that since $g_{\bfX} (\alpha_n) = 0$, we also have $h_n(0)=0<\sup_{\beta\in\R}h_n(\beta)=\widehat g$.
Let $\widehat \beta_n = \inf \{ \beta \in \R \mid h_n(\beta) = \widehat g \} \leq + \infty $. 
From $h_n(0)<\widehat g$, the continuity and monotonicity of $h_n$, we have $\hat{\beta_n}> 0$ for every $n$.
Additionally, $h_n$ as a univariate concave and increasing function is strictly increasing on $(-\infty, \widehat \beta_n)$. Thus, for some $0<\varepsilon_n < \min(\frac1n,\widehat\beta_n)$ we have $0=h_n(0)<h_n(\varepsilon_n)=g_{\bfX}(\alpha_n+\varepsilon_n \bfone)$. Thus $\beta_n:=\alpha_n+\varepsilon_n\bfone$ defines a minimizing sequence, with $g_{\bfX}(\beta_n)>0$:
$$\sum_{m=1}^M\beta_n^m=\sum_{m=1}^M\alpha_n^m+M\varepsilon_n\downarrow_n\inf \{\sum_{m=1}^M\alpha^m\mid \alpha\in  \R^M,\, g_{\bfX} (\alpha) \geq 0 \}$$ 
and one can argue as in Case 1.

\section{Image functions on $\R^N$} \label{secproofprop}

%\subsection{ } \label{Sec3}
We now present some key properties of image functions. All definitions, as well as the notation, are mutuated from \cite{Ro70}. For convenience of the reader and to simplify the comparison with this reference, we thus opted to present the concepts and results for \emph{convex} functions $f$ and linear maps $A$, which replace the function $(-U)$ and the linear map $\pi$ in the previous sections.

\textbf{In this Section \ref{secproofprop} we work again in Setting \ref{settingbase} without further mention.}
We set $f=-U:\R^N\rightarrow \R$ and denote by $f^*$ the usual convex conjugate function of $f$, $$f^*(\bfz):=\sup_{\bfx\in\R^N}\left(\sum_{j=1}^Nx^jz^j-f(\bfx)\right)\in(-\infty,+\infty]\,,\quad \bfz\in\R^N.$$
%We consider now a classic notion in convex analysis, see for example \cite{Ro70} Theorem 5.7.
\begin{definition}
The \emph{image function of $f$ under $A$} is the function $\Af:\R^M\rightarrow [-\infty,+\infty]$ 
\begin{equation}
    \label{imagefunction}
    \Af(\bfy):=\inf\left\{f(\bfx)\mid \bfx\in\R^N, A\bfx=\bfy\right\},  \text{   } \bfy\in\R^M,
\end{equation}
 with the usual convention $\inf\emptyset=+\infty$.
 We say that, for a given $\bfy\in\R^M$,  the problem in \eqref{imagefunction}  admits an optimum if there exists $\bfx=\bfx(\bfy)\in \R^N$ such that $A\bfx=\bfy$ and $\Af(\bfy)=f(\bfx)$. 
 \end{definition}
 In \cite{Ro70} the image function $\Af$ is denoted by $Af$. 
 Observe that $\Af=-\square_\pi U$, \ale{as in \eqref{111},} for $f=-U$ and $A=\pi$ and that
 %under Assumption \ref{basica}, 
 $\Af $ is a convex function (Th. 5.7 \cite{Ro70}) and, since $f$ is real valued on the whole $\R^N$ and $A$ has full range, $\Af(\bfy)<+\infty$ for every $\bfy \in \R^M$. 
We stress that by strict convexity of $f$, the problem \eqref{imagefunction} admits at most one optimum. 
%The following is the convex analogous of the Assumption \ref{assexput}.
\begin{remark}
\label{exist:opt}
   Under Assumption  \ref{assexput}, we have that for some $\bfy\in\R^M$,  the problem in \eqref{imagefunction}  admits an optimum.
   % there exists $\bfx=\bfx(\bfy)\in \R^N$ such that $A\bfx=\bfy$ and $Af(\bfy)=f(\bfx)$. 
   %Equivalently, (ii) in Proposition \ref{prop:KuhnTucker} holds at every $\bfy\in\R^M$.
\end{remark}

We refer to \cite{Ro70} Chapter 26 for definitions of essentially smooth functions and Legendre type pairs. 
%Under Assumption \ref{basica} 
In our setting, $f$ is essentially smooth on the whole $\R^N$ and thus $(f,\R^N)$ is of Legendre type. We also briefly recall the following key results:
\begin{theorem}[\cite{Ro70} Theorem 26.5]
\label{Ro7026.5}
Let $h:\R^D\rightarrow(-\infty,+\infty]$ be a closed (i.e. \ale{proper} lower semicontinuous w.r.t. the usual Euclidean topology) convex function. Set $C:=\mathrm{int}\,\mathrm{dom}(h),C^*:=\mathrm{int}\,\mathrm{dom}(h^*)$. Then $(h,C)$ is a convex function of Legendre type iff so is $(h^*,C^*)$. When these conditions hold, $\nabla h$ is one-to-one from the open convex set $C$ onto $C^*$, continuous in both directions, and $(\nabla h)^{-1}=\nabla h^*$.  
\end{theorem}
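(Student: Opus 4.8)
The plan is to reduce the theorem to the two structural halves of the Legendre property together with the conjugate subgradient correspondence, rather than to argue everything from scratch. Recall that $(h,C)$ being of Legendre type means precisely that $h$ is \emph{essentially smooth} relative to $C$ (differentiable throughout $C$, with $|\nabla h(x_n)|\to\infty$ whenever $x_n\to$ a boundary point of $C$) and \emph{essentially strictly convex} (strict convexity on the set where the subdifferential is nonempty). I would organise the whole argument around the identity, valid for any closed proper convex $h$,
\[
y\in\partial h(x)\iff x\in\partial h^*(y)\iff h(x)+h^*(y)=\langle x,y\rangle,
\]
which is the conjugate subgradient relation.

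First I would establish the biconditional ``$(h,C)$ Legendre $\iff (h^*,C^*)$ Legendre.'' This follows by combining two dual facts: $h$ is essentially smooth iff $h^*$ is essentially strictly convex, and $h$ is essentially strictly convex iff $h^*$ is essentially smooth. Each Legendre property is the conjunction of one smoothness and one strict-convexity condition, so applying these two equivalences to the pair $(h,h^*)$ simply interchanges the two conditions, whence $h$ Legendre iff $h^*$ Legendre. These two equivalences are genuine theorems (the companion results to Theorem \ref{Ro7026.5} in the same reference), and in a self-contained treatment I would cite them at this point.

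Second, assuming the Legendre property holds, I would read off the bijection and its analytic form. Essential smoothness of $h$ forces $\partial h$ to be single-valued on $C$, equal to $\nabla h$, and empty off $C$; symmetrically $\partial h^*=\nabla h^*$ on $C^*$ and empty elsewhere. Hence $\mathrm{range}(\nabla h)=\mathrm{range}(\partial h)=\{y:\partial h^*(y)\neq\emptyset\}=\mathrm{dom}(\partial h^*)=C^*$, where the middle equality uses the displayed relation and the last uses essential smoothness of $h^*$. Strict convexity of $h$ on $C$ makes $\nabla h$ injective there, and reading the equivalence with $\partial h=\nabla h$, $\partial h^*=\nabla h^*$ gives $y=\nabla h(x)\iff x=\nabla h^*(y)$, so $\nabla h\colon C\to C^*$ is a bijection with inverse $\nabla h^*$. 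Continuity in both directions is then automatic, since the gradient of a closed convex function is continuous on any open set on which the function is differentiable.

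The main obstacle is the first step: the two dual equivalences between essential smoothness and essential strict convexity are not formal manipulations but rest on the geometry relating faces and directions of recession of $\mathrm{dom}(h)$ to the differentiability of $h^*$, and on the blow-up of $|\nabla h|$ at $\partial C$ corresponding to the absence of affine pieces in $h^*$. Everything after that—the range computation, the inversion $(\nabla h)^{-1}=\nabla h^*$, and the continuity—is bookkeeping with the subgradient correspondence and the standard continuity of convex gradients.
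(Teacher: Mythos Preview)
The paper does not prove this statement: Theorem \ref{Ro7026.5} is quoted verbatim from \cite{Ro70} as a tool (note the attribution ``\cite{Ro70} Theorem 26.5'' in the header) and is invoked later in Propositions \ref{foralliffforsome} and \ref{prop:explicit formula} without any argument supplied. So there is no ``paper's own proof'' to compare against.

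That said, your sketch is a faithful outline of how the result is actually established in the reference. The decomposition into the two dual equivalences (essential smoothness of $h$ $\Leftrightarrow$ essential strict convexity of $h^*$, and vice versa) is exactly the content of the companion theorems in \cite{Ro70} Chapter 26, and your second step---reading off the bijection $\nabla h:C\to C^*$ and its inverse $\nabla h^*$ from the conjugate subgradient relation $y\in\partial h(x)\Leftrightarrow x\in\partial h^*(y)$, together with continuity of the gradient on the open set of differentiability---is the standard bookkeeping. Your honest flagging that the first step is the substantive one (resting on the face/recession-direction analysis) is accurate. For the purposes of this paper, however, none of this is needed: the authors simply cite the result and use it as a black box.
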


 \begin{theorem}[\cite{Ro70} Theorem 16.3]
\label{Ro7016.3}
%Let Assumption \ref{basica} be satisfied and
Suppose that there exists $\bflambda\in\R^M$ such that $A^T\lambda\in\mathrm{ri}\,\mathrm{dom}(f^*)$. Then for every $\bfy\in\R^M$ there exists an optimum for \eqref{imagefunction}.
\end{theorem}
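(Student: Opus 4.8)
The plan is to prove attainment by showing that, under the stated condition, every minimizing sequence for the infimum defining $\Af(\mathbf{y})$ in \eqref{imagefunction} is bounded; attainment then follows at once from the continuity of $f$ and the closedness of the affine constraint set $\{\mathbf{x}\in\R^N\mid A\mathbf{x}=\mathbf{y}\}$. Fix $\mathbf{y}\in\R^M$. Since $A$ has full range and $f$ is finite on $\R^N$ we already know $\Af(\mathbf{y})<+\infty$, so the first task is to verify $\Af(\mathbf{y})>-\infty$. I would get this from the Fenchel--Young inequality: for the $\lambda$ supplied by the hypothesis, $A^T\lambda\in\mathrm{dom}(f^*)$, hence $f(\mathbf{x})\ge\langle A^T\lambda,\mathbf{x}\rangle-f^*(A^T\lambda)=\langle\lambda,A\mathbf{x}\rangle-f^*(A^T\lambda)$, so that $f(\mathbf{x})\ge\langle\lambda,\mathbf{y}\rangle-f^*(A^T\lambda)>-\infty$ for every feasible $\mathbf{x}$. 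Thus $\Af(\mathbf{y})$ is finite and a minimizing sequence $(\mathbf{x}_k)_k$ with $A\mathbf{x}_k=\mathbf{y}$ and $f(\mathbf{x}_k)\to\Af(\mathbf{y})$ is well defined.

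Next I would argue by contradiction that $(\mathbf{x}_k)_k$ is bounded. If not, after passing to a subsequence I write $\mathbf{x}_k=\mathbf{x}_0+t_k\mathbf{d}_k$ with $t_k\to+\infty$, $\|\mathbf{d}_k\|=1$ and $\mathbf{d}_k\to\mathbf{d}$, $\|\mathbf{d}\|=1$. From $A\mathbf{x}_k=\mathbf{y}$ we obtain $A\mathbf{d}_k=(\mathbf{y}-A\mathbf{x}_0)/t_k\to0$, so $A\mathbf{d}=0$. Since $f(\mathbf{x}_k)$ remains bounded while $t_k\to+\infty$, a standard lower-semicontinuity argument for the recession function shows that $\mathbf{d}$ is a direction of recession of $f$, i.e. $(f0^+)(\mathbf{d})\le0$. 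The crucial point is then to upgrade this one-sided bound to the two-sided flatness $(f0^+)(\mathbf{d})=(f0^+)(-\mathbf{d})=0$, and this is exactly where the hypothesis on $A^T\lambda$ is used.

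For that step I would invoke the identification of the recession function of a closed proper convex function with the support function of the domain of its conjugate (\cite{Ro70}, Theorem 13.3): $(f0^+)(\mathbf{d})=\sup\{\langle\mathbf{d},\mathbf{z}\rangle\mid\mathbf{z}\in\mathrm{dom}(f^*)\}$. Hence $(f0^+)(\mathbf{d})\le0$ means $\langle\mathbf{d},\mathbf{z}\rangle\le0$ for all $\mathbf{z}\in\mathrm{dom}(f^*)$. On the other hand $A\mathbf{d}=0$ gives $\langle\mathbf{d},A^T\lambda\rangle=\langle A\mathbf{d},\lambda\rangle=0$, so the linear functional $\mathbf{z}\mapsto\langle\mathbf{d},\mathbf{z}\rangle$ attains its supremum value $0$ over the convex set $\mathrm{dom}(f^*)$ at the point $A^T\lambda\in\mathrm{ri}\,\mathrm{dom}(f^*)$. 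The standard relative-interior lemma (a linear functional bounded above on a convex set that attains its bound at a relative interior point is constant on that set) then forces $\langle\mathbf{d},\mathbf{z}\rangle=0$ for every $\mathbf{z}\in\mathrm{dom}(f^*)$. Taking suprema yields $(f0^+)(\mathbf{d})=(f0^+)(-\mathbf{d})=0$, so $\mathbf{d}$ lies in the lineality space of $f$. But in Setting \ref{settingbase} the function $f=-U$ is strictly convex, so its lineality space is $\{0\}$, contradicting $\|\mathbf{d}\|=1$. Therefore no unbounded minimizing sequence exists, $(\mathbf{x}_k)_k$ is bounded, and any cluster point is the sought optimum.

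The main obstacle is the middle step, converting the one-sided recession bound $(f0^+)(\mathbf{d})\le0$ into two-sided flatness $(f0^+)(\pm\mathbf{d})=0$; the hypothesis $A^T\lambda\in\mathrm{ri}\,\mathrm{dom}(f^*)$ enters precisely here, through the support-function representation of the recession function combined with the relative-interior maximality lemma. I note that strict convexity is used only to close the contradiction immediately; in the general case of Rockafellar's statement one instead removes the lineality component by projecting the minimizing sequence onto the orthogonal complement of $(\ker A)\cap\{\mathbf{d}\mid(f0^+)(\pm\mathbf{d})=0\}$, which leaves both the constraint $A\mathbf{x}=\mathbf{y}$ and the values $f(\mathbf{x}_k)$ unchanged and again produces a bounded minimizing sequence.
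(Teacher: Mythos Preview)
Your argument is correct. The Fenchel--Young lower bound gives finiteness of $\Af(\bfy)$, the recession-cone/level-set argument (via Rockafellar Theorems 8.2 and 8.7) yields $(f0^+)(\mathbf{d})\le0$ for any cluster direction $\mathbf{d}\in\ker A$, and the support-function identity $(f0^+)=\delta^*_{\mathrm{dom}\,f^*}$ together with the relative-interior maximality lemma forces $\mathbf{d}$ into the lineality space of $f$, which is $\{0\}$ by strict convexity. This is a clean, self-contained proof.

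The paper takes a different, shorter route: it does not prove the statement from scratch but recognizes the problem as an instance of the attainment clause of \cite{Ro70} Theorem 16.3. Writing $f=(f^*)^*=:g^*$ and $A=(A^T)^T$, the infimum in \eqref{imagefunction} is rewritten as $\inf\{g^*(\bfx)\mid (A^T)^T\bfx=\bfy\}$, which matches the dual formulation treated in the last part of that theorem; the hypothesis $A^T\lambda\in\mathrm{ri}\,\mathrm{dom}(g)$ is precisely what is assumed, and attainment follows by direct citation. In effect the paper outsources the entire compactness\slash recession argument to Rockafellar, while you reconstruct it. Your approach has the advantage of exposing exactly where and how the relative-interior condition is consumed (to rule out recession directions in $\ker A$), and your closing remark on projecting out the lineality component in the non-strictly-convex case is essentially how Rockafellar's own proof proceeds. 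The paper's approach is simply more economical given that the result is already available in the cited reference.
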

\begin{proof}
Since $f$ is real valued and convex we have $f=(f^*)^*$ on $\R^N$. Moreover $(A^T)^T=A$ and thus we can rewrite 
\begin{align*}
    \inf\left\{f(\bfx)\mid \bfx\in\R^N, A\bfx=\bfy\right\}=\inf\left\{(f^*)^*(\bfx)\mid \bfx\in\R^N, (A^T)^T\bfx=\bfy\right\}
\end{align*}
Setting $g=f^*$, which is a convex function on $\R^N$, we recognize the setup of the last part in \cite{Ro70} Theorem 16.3 with $f,\lambda,\bfx,\bfy,A^T$ here in place of $g^*,\bfx,\bfy^*,\bfx^*,A$ in the reference respectively. By hypothesis we have $A^T\lambda\in\mathrm{ri}\,\mathrm{dom}(g)$ and so by \cite{Ro70} Theorem 16.3 the infimum in \eqref{imagefunction} is attained.
\end{proof}

\begin{proposition}
\label{prop:KuhnTucker}
    %Suppose Assumption \ref{basica} is satisfied. 
    Take $\bfy\in\R^M$ such that $\Af(\bfy)>-\infty$. Then $\Af(\bfy)\in\R$ and the following are equivalent:
    \begin{enumerate}[label=(\roman*)]
        \item  there exists the optimum  $\bfx=\bfx(\bfy)\in \R^N$ for \eqref{imagefunction} 
        \item  there exist $({\bfx},\bflambda)=(\bfx(\bfy),\bflambda(\bfy))\in \R^N\times \R^M$  solving 
\begin{equation}
    \label{eq:KuhnTucker}
    \begin{cases}
         \nabla f({\bfx})=A^T\bflambda\\
         A\bfx=\bfy
        \end{cases}
\end{equation}
    \end{enumerate}
\end{proposition}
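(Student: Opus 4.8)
The plan is to read Proposition \ref{prop:KuhnTucker} as the classical Kuhn--Tucker / Lagrange-multiplier characterization of optimality for the convex program $\inf\{f(\bfx)\mid A\bfx=\bfy\}$, exploiting that $f=-U$ is finite-valued, convex and differentiable on all of $\R^N$ and that the constraint $A\bfx=\bfy$ is affine and always feasible (since $A$ is onto $\R^M$). Before the equivalence I would record that $\Af(\bfy)\in\R$: we already noted $\Af(\bfy)<+\infty$ for every $\bfy$ (because $f$ is real-valued on $\R^N$ and $A$ has full range), so the standing hypothesis $\Af(\bfy)>-\infty$ yields $\Af(\bfy)\in\R$ at once.

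For (ii) $\Rightarrow$ (i) I would argue directly. Suppose $(\bfx,\bflambda)$ solves \eqref{eq:KuhnTucker}. For any $\bfx'\in\R^N$ with $A\bfx'=\bfy$, the gradient inequality for the convex differentiable function $f$ gives $f(\bfx')\ge f(\bfx)+\langle\nabla f(\bfx),\bfx'-\bfx\rangle=f(\bfx)+\langle A^T\bflambda,\bfx'-\bfx\rangle=f(\bfx)+\langle\bflambda,A\bfx'-A\bfx\rangle=f(\bfx)$, because $A\bfx=A\bfx'=\bfy$. Hence $\bfx$ realizes the infimum in \eqref{imagefunction}, i.e. the optimum $\bfx(\bfy)$ exists and $\Af(\bfy)=f(\bfx)$.

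For (i) $\Rightarrow$ (ii) I would exploit first-order stationarity along the null directions of $A$. Let $\bfx$ be an optimum, so $A\bfx=\bfy$ and $f(\bfx)\le f(\bfx+\mathbf{v})$ for every $\mathbf{v}\in\ker A$ (such a perturbation keeps $A(\bfx+\mathbf{v})=\bfy$). Fixing $\mathbf{v}\in\ker A$, the differentiable function $t\mapsto f(\bfx+t\mathbf{v})$ attains its minimum over $\R$ at $t=0$, so its derivative vanishes there: $\langle\nabla f(\bfx),\mathbf{v}\rangle=0$. Since $\mathbf{v}\in\ker A$ was arbitrary, $\nabla f(\bfx)\in(\ker A)^\perp=\mathrm{range}(A^T)$, so there is $\bflambda\in\R^M$ with $\nabla f(\bfx)=A^T\bflambda$; together with $A\bfx=\bfy$ this is precisely \eqref{eq:KuhnTucker}. (Full row rank of $A$ makes $\bflambda$ unique and strict convexity of $f$ makes $\bfx$ unique, matching the functional notation $\bfx(\bfy),\bflambda(\bfy)$.)

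I do not expect a genuine obstacle: the only subtlety is that no constraint qualification is needed here because the feasible set is an affine subspace, whose normal cone at every point equals $\mathrm{range}(A^T)$ — exactly what stationarity along $\ker A$ produces. Should one prefer to stay within the framework of \cite{Ro70}, an alternative route via the Fenchel-duality identity underlying Theorem \ref{Ro7016.3} together with subdifferential calculus is also available, but the elementary argument above is shorter and self-contained.
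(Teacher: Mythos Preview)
Your proof is correct and fully rigorous; the argument for $\Af(\bfy)\in\R$, the gradient-inequality verification of (ii)$\Rightarrow$(i), and the stationarity-along-$\ker A$ derivation of (i)$\Rightarrow$(ii) are all sound, and the identification $(\ker A)^\perp=\mathrm{range}(A^T)$ is the standard linear-algebra fact that closes the loop.

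The paper takes a different, less elementary route: it recognizes \eqref{imagefunction} as an \emph{ordinary convex program} in the terminology of \cite{Ro70} Chapter~28 and invokes the Kuhn--Tucker Theorem (\cite{Ro70} Corollary~28.3.1) directly. Conditions (a), (b), (c) of \cite{Ro70} Theorem~28.3 are then unpacked---(a) is vacuous because there are no inequality constraints, (b) is feasibility $A\bfx=\bfy$, and (c), thanks to differentiability, reduces to $\nabla f(\bfx)=A^T\bflambda$. Your approach has the advantage of being self-contained and transparent (no constraint-qualification discussion is needed precisely because the feasible set is affine, as you note), while the paper's approach has the advantage of situating the result within a ready-made framework that is reused elsewhere in Section~\ref{secproofprop} (Theorems \ref{Ro7026.5} and \ref{Ro7016.3}, Proposition~\ref{prop:explicit formula}), keeping the presentation stylistically uniform.
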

\begin{proof}
We already know that $\Af(\bfy)<+\infty$ for every $\bfy \in \R^M$, thus $\Af(\bfy)\in\R$. Observe that, once $\bfy\in\R^M$ is fixed, \eqref{imagefunction} is what is called in \cite{Ro70} Chapter 28 an ordinary convex program admitting a feasible solution (since $f$ is real valued). Its set of constraint is given by $\bfy-A\bfx=0$, 
%clearly a finite set of constraints given by affine functions, 
and $C=\mathrm{ri}(C)=\R^N$ in the notation of \cite{Ro70}.  By the Kuhn-Tucker Theorem (\cite{Ro70} Corollary 28.3.1, whose hypotheses are met since we are assuming $\Af(\bfy)>-\infty$), (i) in the statement is equivalent to: there exists a pair $({\bfx},\bflambda)\in \R^N\times \R^N$ satisfying conditions (a),(b),(c) in \cite{Ro70} Theorem 28.3 (with $\bfx$ in place of $\overline{\bfx}$ and $\bflambda$ in place of $\bfu^*$). By the discussion following the proof of \cite{Ro70} Theorem 28.3, since $f$ and the functions enforcing the constraints  are differentiable, condition (c) can be rewritten as $\nabla f({\bfx})-A^T\bflambda=\bfzero$. Condition (b) is $A\bfx=\bfy$, and conditions (a) can actually be ignored since we have no inequality constraints. This proves that (i), (ii) are equivalent.
\end{proof}
\begin{proposition}
\label{foralliffforsome}
    %Under Assumption \ref{basica} 
    The following are equivalent:
    \begin{enumerate}[label=(\arabic*)]
        \item there exists the optimum  %$\bfx=\bfx(\bfy)\in \R^N$ 
        for \eqref{imagefunction} for \textbf{all} $\bfy\in\R^M$. 
        \item there exists the optimum  %$\bfx=\bfx(\bfy)\in \R^N$ 
        for \eqref{imagefunction} for \textbf{some} $\hat\bfy\in\R^M$.
    \end{enumerate}
\end{proposition}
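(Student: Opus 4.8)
The plan is to establish the nontrivial implication $(2)\Rightarrow(1)$; the converse is immediate, since $\R^M\neq\emptyset$. The idea is to pass to the dual side: the existence of a single optimum produces, through the Kuhn--Tucker characterization of Proposition \ref{prop:KuhnTucker}, a multiplier $\hat\bflambda\in\R^M$ with $A^T\hat\bflambda$ lying in the range of $\nabla f$; and because $(f,\R^N)$ is of Legendre type, that range is exactly $\mathrm{int}\,\mathrm{dom}(f^*)$, which is precisely the kind of point Theorem \ref{Ro7016.3} requires in order to force attainment of the infimum in \eqref{imagefunction} at \emph{every} $\bfy\in\R^M$.

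In detail, I would first observe that if \eqref{imagefunction} admits an optimum $\hat\bfx$ at some $\hat\bfy\in\R^M$, then $\Af(\hat\bfy)=f(\hat\bfx)\in\R$ since $f$ is real valued; in particular $\Af(\hat\bfy)>-\infty$, so Proposition \ref{prop:KuhnTucker} applies and yields a pair $(\hat\bfx,\hat\bflambda)\in\R^N\times\R^M$ with $\nabla f(\hat\bfx)=A^T\hat\bflambda$ and $A\hat\bfx=\hat\bfy$. Next I would use that $f$ is a finite convex function on $\R^N$ (hence closed) and that $(f,\R^N)$ is of Legendre type: Theorem \ref{Ro7026.5} with $h=f$ then gives that $\nabla f$ is a bijection from $\R^N=\mathrm{int}\,\mathrm{dom}(f)$ onto $C^*:=\mathrm{int}\,\mathrm{dom}(f^*)$. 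Consequently $A^T\hat\bflambda=\nabla f(\hat\bfx)\in\mathrm{int}\,\mathrm{dom}(f^*)$; in particular $\mathrm{dom}(f^*)$ has nonempty interior, so its affine hull is all of $\R^N$ and $\mathrm{ri}\,\mathrm{dom}(f^*)=\mathrm{int}\,\mathrm{dom}(f^*)$, whence $A^T\hat\bflambda\in\mathrm{ri}\,\mathrm{dom}(f^*)$.

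Finally I would invoke Theorem \ref{Ro7016.3} with $\bflambda=\hat\bflambda$: it asserts that for every $\bfy\in\R^M$ there exists an optimum for \eqref{imagefunction}, which is exactly statement $(1)$. I do not expect a genuine obstacle; the only delicate points are bookkeeping ones, namely checking that the standing hypotheses of Setting \ref{settingbase} indeed make $\Af(\hat\bfy)>-\infty$, $f$ closed, and $(f,\R^N)$ of Legendre type so that Proposition \ref{prop:KuhnTucker} and Theorem \ref{Ro7026.5} are applicable, together with the elementary convex-analysis fact that a convex set with nonempty interior satisfies $\mathrm{ri}=\mathrm{int}$, which is what bridges the Legendre conclusion ``$A^T\hat\bflambda\in\mathrm{int}\,\mathrm{dom}(f^*)$'' with the hypothesis ``$A^T\hat\bflambda\in\mathrm{ri}\,\mathrm{dom}(f^*)$'' needed by Theorem \ref{Ro7016.3}.
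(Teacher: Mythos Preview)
Your proposal is correct and follows essentially the same route as the paper's proof: use Proposition \ref{prop:KuhnTucker} to extract a multiplier $\hat\bflambda$ with $A^T\hat\bflambda=\nabla f(\hat\bfx)$, invoke Theorem \ref{Ro7026.5} to place this point in $\mathrm{int}\,\mathrm{dom}(f^*)\subseteq\mathrm{ri}\,\mathrm{dom}(f^*)$, and then apply Theorem \ref{Ro7016.3}. Your write-up is somewhat more explicit about the bookkeeping (finiteness of $\Af(\hat\bfy)$, closedness of $f$, and the $\mathrm{ri}=\mathrm{int}$ argument), but the argument is the same.
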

\begin{proof}
    Clearly (1) implies (2). As to the converse, observe that since $f$ is essentially smooth, by Theorem \ref{Ro7026.5} $\nabla f(\R^N)=\mathrm{int}\,\mathrm{dom}f^*\subseteq \mathrm{ri}\,\mathrm{dom}f^*$. Then, by the equivalence established in Proposition \ref{prop:KuhnTucker}, $A^T\lambda(\hat\bfy)=\nabla f(\bfx(\hat\bfy))\in\mathrm{int}\,\mathrm{dom}f^*\subseteq \mathrm{ri}\,\mathrm{dom}f^*$ and Theorem \ref{Ro7016.3}  yields the existence of the optimum 
    %for $\Af(\bfy)$, 
    for every $\bfy \in \R^M$. 
\end{proof}

\begin{proposition}
    \label{prop:explicit formula}
    Under Assumption %\ref{basica} and
    \ref{assexput}, the map $\Af:\R^M\rightarrow \R$ is continuously differentiable and strictly convex on $\R^M$. Its conjugate is given by $(\Af)^*(\bfz)=f^*( A^T\bfz),\,\bfz\in\R^M$, which is continuously differentiable on the interior of its domain. The gradient $\nabla \Af(\cdot)$ is a homeomorphism between $\R^M$ and $\mathrm{int}\,\mathrm{dom}(\Af)^*=\mathrm{int}\{\bfz\in\R^M\mid A^T\bfz\in\mathrm{dom}f^*\}=:\mathcal{O}\subseteq \R^M$, and its (continuous) inverse is given by $\nabla (\Af)^*(\bfz),\bfz\in\mathcal{O}$. Finally, the unique optimum $\bfx=\bfx(\bfy)$ of problem \eqref{imagefunction} is given by 
    \begin{equation}
        \label{formula:opt}
        \bfx=\Theta(\bfy)=\nabla f^*\Big(A^T \cdot\nabla (\Af)(\bfy)\Big)
    \end{equation}
where $\Theta:\R^M \to \R^N$ is continuous on $\R^M$.   
\end{proposition}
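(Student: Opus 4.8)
The plan is to lean on the tools already assembled: the Kuhn--Tucker description of optima (Proposition \ref{prop:KuhnTucker}), the ``for some $\Rightarrow$ for all'' principle (Proposition \ref{foralliffforsome}), and the Legendre-type duality of Theorem \ref{Ro7026.5}. First I would record two preliminary facts. Under Assumption \ref{assexput}, Proposition \ref{foralliffforsome} makes the infimum in \eqref{imagefunction} attained for \emph{every} $\bfy\in\R^M$, so $\Af(\bfy)=f(\bfx(\bfy))\in\R$; combined with convexity of $\Af$ (Theorem~5.7 in \cite{Ro70}), this shows $\Af:\R^M\to\R$ is a finite convex function, hence continuous and closed. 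Second, by the elementary interchange of suprema $\sup_{\bfy}\sup_{A\bfx=\bfy}=\sup_{\bfx}$ (equivalently \cite{Ro70} Theorem~16.3) one gets $(\Af)^*(\bfz)=\sup_{\bfx}(\langle A^T\bfz,\bfx\rangle-f(\bfx))=f^*(A^T\bfz)$ for all $\bfz\in\R^M$, so that $\mathrm{dom}(\Af)^*=\{\bfz\mid A^T\bfz\in\mathrm{dom}f^*\}$ and $\mathcal{O}:=\mathrm{int}\,\mathrm{dom}(\Af)^*$ is exactly the set described in the statement.

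The heart of the proof is to show $\Af$ is differentiable and strictly convex on all of $\R^M$. Fix $\bfy$ and let $\bfx=\bfx(\bfy)$ be the (unique, by strict convexity of $f$) optimum of \eqref{imagefunction}. Using $\Af(\bfy)=f(\bfx)$, the conjugate formula, and $\langle\bflambda,\bfy\rangle=\langle A^T\bflambda,\bfx\rangle$, a vector $\bflambda$ lies in $\partial\Af(\bfy)$ iff $f(\bfx)+f^*(A^T\bflambda)=\langle A^T\bflambda,\bfx\rangle$, i.e.\ iff $A^T\bflambda=\nabla f(\bfx)$ (as $f$ is differentiable, $\partial f(\bfx)=\{\nabla f(\bfx)\}$). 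Since $\mathrm{rank}(A)=M$, the map $A^T$ is injective, so $\partial\Af(\bfy)$ has at most one element; being the subdifferential of a finite convex function it is nonempty, hence a singleton. Thus $\Af$ is differentiable at $\bfy$ with $\nabla\Af(\bfy)=\bflambda(\bfy)$, the Lagrange multiplier from Proposition \ref{prop:KuhnTucker}; differentiability everywhere then upgrades to $\Af\in C^1(\R^M)$ (\cite{Ro70} Theorem~25.5). Strict convexity is a short feasibility argument: for $\bfy_1\neq\bfy_2$ with optima $\bfx_1,\bfx_2$ we have $\bfx_1\neq\bfx_2$ since $A\bfx_i=\bfy_i$, and for $t\in(0,1)$ the point $t\bfx_1+(1-t)\bfx_2$ is feasible for $t\bfy_1+(1-t)\bfy_2$, so strict convexity of $f$ gives $\Af(t\bfy_1+(1-t)\bfy_2)\le f(t\bfx_1+(1-t)\bfx_2)<t\Af(\bfy_1)+(1-t)\Af(\bfy_2)$.

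Once $\Af$ is closed, $C^1$ and strictly convex on $\R^M=\mathrm{int}\,\mathrm{dom}\,\Af$ --- whose boundary is empty, so essential smoothness holds vacuously --- the pair $(\Af,\R^M)$ is of Legendre type, and Theorem \ref{Ro7026.5} applied to $h=\Af$ yields at once: $((\Af)^*,\mathcal{O})$ is of Legendre type, so $(\Af)^*=f^*\circ A^T$ is differentiable, hence $C^1$, on $\mathcal{O}$; and $\nabla\Af:\R^M\to\mathcal{O}$ is a homeomorphism with $(\nabla\Af)^{-1}=\nabla(\Af)^*$. For the optimum formula I would combine $\nabla\Af(\bfy)=\bflambda(\bfy)$ with the Kuhn--Tucker relation $\nabla f(\bfx(\bfy))=A^T\bflambda(\bfy)$: applying $\nabla f^*=(\nabla f)^{-1}$ (Theorem \ref{Ro7026.5} for the Legendre pair $(f,\R^N)$, which also gives $\nabla f(\R^N)=\mathrm{int}\,\mathrm{dom}\,f^*$) yields $\bfx(\bfy)=\nabla f^*(A^T\nabla\Af(\bfy))=\Theta(\bfy)$. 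The same computation shows $A^T(\mathcal{O})=A^T\nabla\Af(\R^M)\subseteq\nabla f(\R^N)=\mathrm{int}\,\mathrm{dom}\,f^*$, so $\Theta$ is well defined and, being the composition of $\nabla\Af$, the linear map $A^T$, and $\nabla f^*$ restricted to $\mathrm{int}\,\mathrm{dom}\,f^*$ (continuous there by Theorem \ref{Ro7026.5}), it is continuous on $\R^M$.

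The main obstacle is the middle step: computing $\partial\Af(\bfy)$ and recognizing it as the singleton $\{\bflambda(\bfy)\}$. What makes this go through is the injectivity of $A^T$ (guaranteed by $\mathrm{rank}(A)=M$, implied by $A(\R^N_+)=\R^M_+$); once differentiability and strict convexity of $\Af$ are in hand, the rest is a direct application of Theorem \ref{Ro7026.5} together with the elementary domain identifications above.
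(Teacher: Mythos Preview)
Your proof is correct and reaches the same conclusion as the paper, but by a somewhat different and more direct route at two places. For differentiability of $\Af$, the paper invokes \cite{Ro70} Corollary~26.3.3 (essential smoothness of $\Af$ is inherited from that of $f$ once one knows some $A^T\bflambda\in\mathrm{ri}\,\mathrm{dom}f^*$, which is extracted from the Kuhn--Tucker pair); you instead compute $\partial\Af(\bfy)$ explicitly via Fenchel--Young and the conjugate formula, and use injectivity of $A^T$ to force it to be a singleton. This buys you, as an immediate byproduct, the identification $\nabla\Af(\bfy)=\bflambda(\bfy)$, whereas the paper recovers this identity only after a detour: it first writes $\bfy=A\nabla f^*(A^T\bflambda)$, then checks that $\bflambda\in\mathcal{O}$ and that $A\nabla f^*(A^T\cdot)=\nabla(\Af)^*(\cdot)$ on a suitable open set via the chain rule, and finally inverts using Theorem~\ref{Ro7026.5}. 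The strict convexity step, the conjugate formula $(\Af)^*=f^*\circ A^T$, and the final application of Theorem~\ref{Ro7026.5} to the Legendre pair $(\Af,\R^M)$ are essentially the same in both arguments. Your approach is more elementary (it dispenses with Corollary~26.3.3) and makes the role of $\mathrm{rank}(A)=M$ more transparent; the paper's approach is more modular, relying on structure theorems rather than a direct subdifferential computation.
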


\begin{proof}
    % By Remark \ref{exist:opt}, there exists an optimum of \eqref{imagefunction} and thus, from Proposition \ref{foralliffforsome}, 
    \ale{By Remark \ref{exist:opt} and  Proposition \ref{foralliffforsome}} 
    there exists an optimum of \eqref{imagefunction} for all $\bfy\in\R^M$.  In particular $\Af(\bfy)\in\R\,\forall\bfy\in\R^M$. As argued in the proof of Proposition \ref{foralliffforsome} there exists $\bflambda\in \R^M$  s.t. $A^T\bflambda\in\mathrm{int}\,\mathrm{dom}(f^*)\subseteq \mathrm{ri}\,\mathrm{dom}(f^*)$. Since $f$ is essentially smooth on the whole $\R^N$, by \cite{Ro70} Corollary 26.3.3 $\Af$ is itself essentially smooth throughout the whole $\R^M$. Existence of optima yields by standard arguments the strict convexity of $\Af$, which is induced by the one of $f$, and $(\Af,\R^M)$ is then of Legendre type. By Theorem \ref{Ro7026.5} applied to $h=\Af$, 
    $\Af:\R^M\rightarrow \R$ is continuously differentiable on $\R^M$. Its conjugate $(\Af)^*$ is continuously differentiable on $\mathcal{O}$, the gradient $\nabla (\Af)(\cdot)$ is a homeomorphism between $\R^M$ and $\mathcal{O}$, and its (continuous) inverse is given by $\nabla (\Af)^*(\cdot )$.
    Now, fix $\bfy\in\R^M$ and take $({\bfx},\bflambda)=(\bfx(\bfy),\bflambda(\bfy))\in \R^N\times \R^M$ solving \eqref{eq:KuhnTucker}. In particular  $A^T\bflambda\in\mathrm{int}\,\mathrm{dom}(f^*)$, and  $\bfx=(\nabla f)^{-1}(A^T\bflambda)=\nabla f^*(A^T\lambda)$, by Theorem \ref{Ro7026.5}. Since $A\bfx=\bfy$, we get $\bfy=A\nabla f^*(A^T\lambda)$. The last step is to prove that $\lambda\in\mathrm{int}\,\mathrm{dom}(\Af)^*=\mathcal{O}$ and $A\nabla f^*(A^T\lambda)=\nabla (\Af)^*(\bflambda)$, as this would give    $\lambda=(\nabla (\Af)^*)^{-1}(\bfy)=\nabla (\Af)(\bfy)$ by Theorem \ref{Ro7026.5} so that $\bfx=\nabla f^*\Big(A^T \lambda\Big)=\nabla f^*\Big(A^T \cdot\nabla (\Af)(\bfy)\Big)$.

    We come to these verifications. First, observe that as argued before $A^T\lambda\in\mathrm{int}\,\mathrm{dom}f^*$, which is open. Then, $\lambda$ belongs to the pre-image of $\mathrm{int}\,\mathrm{dom}f^*$ under $A^T$, which is open by continuity of $A^T$. Set now $\mathcal{O}':=(A^T)^{-1}(\mathrm{int}\,\mathrm{dom}f^*)$. Since $A^T(\mathcal{O}')=\mathrm{int}\,\mathrm{dom}f^*\subseteq \mathrm{dom}f^*$, then $\bflambda\in\mathcal{O}'\subseteq \mathcal{O}$.
    To conclude we prove that $$\nabla (\Af)^*(\bfz)=A\nabla f^* (A^T\bfz),\quad \bfz\in\mathcal{O}'\subseteq \mathcal{O}.$$
    First, by \cite{Ro70} Theorem 16.3 $(\Af)^*(\bfz)=f^*(A^T\bfz)$ for all $\bfz\in\R^M$. The map $(\Af)^*$ on $\mathcal{O}'$ is then the composition of the map $A^T$, differentiable on $\mathcal{O}'$ and taking values in $\mathrm{int}\,\mathrm{dom}f^*$, and $f^*$, differentiable on the latter set. The formula is then the usual chain rule. Continuity of $\Theta$ follows observing that for every $\bfy\in\R^M$, $A^T\bflambda(\bfy)\in\mathrm{int}\,\mathrm{dom}f^*$, where $\nabla f^*$ is continuous, and that $\lambda(\bfy)=\nabla (\Af)(\bfy)$, the latter being continuous on $\R^M$.
\end{proof}

\section{Proof of Theorem \ref{thmfull}}
\label{sec:proffmainthm}

\textbf{ We work again in the Setting \ref{settingbase}}.
%\begin{proof}[Proof of Theorem \ref{thmfull}]
By Remark \ref{exist:opt}  we may apply the results in Proposition \ref{prop:explicit formula}.
The proof is indeed based on the following two facts that are proven in Proposition \ref{prop:explicit formula}, using there the notation $f:=-U$, $A\bfX:=\pi(\bfX)$ and $\Af=-\square_\pi U$. \\
    a) The function $\spU : \R^M\rightarrow \R$ is continuous on $\R^M$;\\
    b) Fix any $\bfz \in \R^M$. There exist a unique optimum $\bfx=\Theta(\bfz) \in \R^N$ for the problem $\spU(\bfz)$ with $\Theta :\R^M \to \R^N$ being the continuous function on $\R^M$ defined in \eqref{formula:opt}. In particular we have:   (b1) \, $\spU(\bfz)=U(\Theta(\bfz))$ and (b2) \, $\pi(\Theta(\bfz))=\bfz$.

\textbf{Proof of Item \ref{ThItem1}}.  Fix $\bfZ \in (L^{\infty})^M$. We now prove
\begin{equation}\label{eqAF1}
 \sup\left\{\Ep{U(\bfY)}\mid \bfY\in (L^{\infty})^N, \pi(\bfY)=\bfZ \right\}=:\square_{\pi}\UU(\bfZ)=\Ep{\square_{\pi}U(\bfZ)}. 
\end{equation}
    Fix $\bfz \in \R^M$ and observe that if $\bfy\in\R^N$ satisfies $\pi(\bfy)=\bfz$ then  $$U(\bfy)\leq \sup\left\{U(\bfx)\mid \bfx\in\R^N, \pi(\bfx)=\bfz\right\}=\spU(\bfz)\,.$$
    Now, we can plug in $\bfY\in(\Linfty)^N$ s.t. $\pi(\bfY)=\bfZ$, to get $U(\bfY)\leq\spU(\bfZ)$. From (a) we know that $\spU$, as well as $U$, is a continuous function and so no measurability issues arise and both $U(\bfY)$ and $ \spU(\bfZ)$ are bounded random variables.
    We can then take expectations on both sides of the latter inequality and deduce $(\square_{\pi}\UU)(\bfZ) \leq \Ep{(\spU)(\bfZ)}$. 
We prove the converse inequality. 
  %  Observe that by Proposition \ref{prop:explicit formula}, we have a unique optimum $\bfx=\Theta(\bfy)$ for $\spU(\bfy)$ which is a continuous function of $\bfy$, and satisfies $\spU(\bfy)=U(\Theta(\bfy))$, $A\Theta(\bfy)=\bfy$. 
    Consider the continuous function $\Theta$ in (b) and set $\hat{\bfY}:=\Theta(\bfZ)$. Then $\hat{\bfY} \in (\Linfty)^N$ and by (b2) above, $\pi(\hat{\bfY})=\pi(\Theta(\bfZ))=\bfZ$.
    Thus, $\hat{\bfY}$ satisfies the constraints in LHS of \eqref{eqAF1}.
 Consequently, 
    \begin{equation*}
     \sup\left\{\Ep{U(\bfY)}\mid \bfY\in (L^{\infty})^N, \pi(\bfY)=\bfZ \right\}\geq \Ep{U\left(\hat{\bfY}\right)}=\Ep{U\Big(\Theta(\bfZ)\Big)}=\Ep{\spU(\bfZ)},   
    \end{equation*}
    by (b1), which concludes the proof of \eqref{eqAF1}.

\textbf{Proof of Item 2}. Recall that in Setting \ref{settingbase} Assumption \ref{mass: F, pi, C} holds true. Fix $\bfZ \in (L^{\infty})^M$ and $\bfX \in (L^{\infty})^N$. From (a) we deduce that $\Ep{\spU(\bfZ)}<\infty$ and by \eqref{eqAF1}, $\square_{\pi}\UU (\bfZ)=\Ep{\square_{\pi} U(\bfZ)}<\infty$.  By the assumption $\sup \{U(\bfy) \mid \bfy \in \mathbb R ^N \}>0$ we obtain $\sup \{\UU(\bfX+\bfy) \mid \bfy \in \mathbb R ^N \}>0$.  Thus \ale{all the assumptions} in Proposition \ref{prop3} are satisfied and hence $\rho_{\pi, \UU}(\bfX)=\rho_{(\square_{\pi}\UU)}(\pi(\bfX))=\rho_{(\Ep{\square_{\pi} U})}(\pi(\bfX))$, by \eqref{eqAF1}. Recalling the definition in \eqref{eqrhog}, we thus proved \eqref{neweq1}. Regarding the optimality of  $\bfY_\bfX:=-\bfX+\Theta(\pi(\bfX)+\widehat{\alpha})$, observe that
$\UU(\bfX+\bfY_{\bfX})=\UU(\Theta(\pi(\bfX)+\widehat{\alpha}))=\Ep{U(\Theta(\pi(\bfX)+\widehat{\alpha}))}=\Ep{\square_{\pi} U(\pi(\bfX)+\widehat{\alpha})}\geq 0$, where in the last equality we used (b1), and the inequality follows from the optimality of $\widehat{\alpha}$ in \eqref{neweq1}. Thus $\bfY_\bfX$ satisfies the inequality constraint in \eqref{eqA}
Moreover, using the linearity of $\pi$ and (b2) we get $$\pi(\bfY_{\bfX})=\pi\Big(-\bfX+\Theta(\pi(\bfX)+\widehat{\alpha})\Big)=-\pi(\bfX)+\pi\Big(\Theta(\pi(\bfX)+\widehat{\alpha})\Big)=-\pi(\bfX)+\pi(\bfX)+\widehat{\alpha}=\widehat{\alpha}$$
so that $\bfY_\bfX \in \mathcal C$. Finally, $\sum_{m=1}^M\pi^m(\bfY_\bfX)=\sum_{m=1}^M\widehat{\alpha}^m=\rho_{\pi, \UU}(\bfX)$, by optimality of $\widehat{\alpha}$. Thus $\bfY_\bfX$ is the desired optimum, which is unique by the strict concavity of $\UU$.

\textbf{Conclusion, for the case $M=1$}. If $\rho_{\pi, \UU}(\bfX)$ is finite, then the optimality of $\widehat{\alpha}=\rho_{\pi, \UU}(\bfX)$ is directly checked by monotone convergence theorem, considering that $\square_{\pi} U$ is continuous on $\R$ and nondecreasing by Proposition \ref{prop:explicit formula}. Thus, we only need to show that $\rho_{\pi, \UU}(\bfX)\in\R$ for every $\bfX\in (\Linfty)^N$. Since we are in Setting \ref{settingbase}, by Remark \ref{remsuponlinfty} we have $\sup \{\UU(\bfX+\bfy) \mid \bfy \in \mathbb R ^N \}>0$ which yields $\rho_{\pi, \UU}(\bfX)<+\infty$. Suppose now by contradiction that $\rho_{\pi, \UU}(\bfX)=-\infty$ and take a minimizing sequence $\bfY_n\in\mcC$ with $\pi(\bfY_n)\downarrow_n-\infty$ and $\Ep{U(\bfX+\bfY_n)}\geq 0$ for every $n$. By Proposition \ref{prop:KuhnTucker}, since we are under Assumption \ref{assexput} and Remark \ref{exist:opt} applies, there exists a $\lambda\in\R$ such that $A^T\lambda\in \nabla f(\R^N)$. By Theorem \ref{Ro7026.5} we have $A^T\lambda\in\mathrm{int dom}(f^*)\subseteq (-\infty,0)^N$, the latter following from monotonicity of $f=-U$, which implies $\lambda<0$: indeed all the components of $A$ are nonnegative, as $A(\R^N_+)=\R_+$. Now by Fenchel inequality we have $-f^*(A^T\lambda)-f(\bfx)\leq (-\lambda)A\bfx$. Substituting $\bfx$ with $\bfX+\bfY_n$ and taking expectations yields a contradiction, as $f^*(A^T\lambda)\in\R$ and $\Ep{-f(\bfX+\bfY_n)}\geq 0$ for each $n$, while RHS tends to $-\infty$, as $\pi(\bfY_n)\downarrow_n-\infty$.

\appendix
\section{Appendix}

A function $\Phi:[0,+\infty)^N\rightarrow \R$ is called multivariate Orlicz function if it
is null in $0$, convex, continuous, increasing in the usual componentwise order and satisfies: there exist $A > 0$, $B$ constants such that 
$\Phi(\bfx)\geq A\sum_{j=1}^Nx^j-B$ for every $\bfx\in [0,+\infty)^N$. We refer to \cite{Drapeau} and \cite{DoldiFrittelli2022} for further details. Inspired by \cite{DoldiFrittelli2022} Definition 3.4, we say that a function $U:\R^N\rightarrow \R$ is well controlled if
there exist a multivariate Orlicz function $\widehat{\Phi}: R^N\rightarrow \R $ and a function $h :[0,+\infty)\rightarrow \R$ such that $ U(\bfx)\leq -\widehat{\Phi}((x)^-)+\varepsilon \sum_{j=1}^N\abs{x^j}+h(\varepsilon)$ for every $\varepsilon>0$.
\begin{lemma}
    \label{lemmaexistsoptu}
    Suppose $U:\R^N\rightarrow \R$ is strictly concave, strictly increasing in the componentwise order and also well controlled. Suppose also that $\pi$ satisfies Assumption \ref{mass: F, pi, C} (a), and that $\sum_{m=1}^M\pi^m(\bfx)=\sum_{j=1}^Nx^j$ for every $\bfx\in\R^N$. Then Assumption \ref{assexput} is satisfied.
\end{lemma}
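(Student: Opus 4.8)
The plan is to fix an arbitrary $\bfy\in\R^M$ and to prove that $f:=-U$, which is a finite convex (hence continuous) function on $\R^N$, is coercive along the constraint set $K_\bfy:=\{\bfx\in\R^N\mid\pi(\bfx)=\bfy\}$. Once this is established, a routine minimizing--sequence argument finishes the job: a minimizing sequence for $f$ over $K_\bfy$ is bounded by coercivity, so it has a subsequence converging to some $\bfx(\bfy)$, which lies in the closed set $K_\bfy$, and continuity of $f$ forces $f(\bfx(\bfy))=\inf_{K_\bfy}f$; this $\bfx(\bfy)$ is an optimum for $\spU(\bfy)$, which is exactly Assumption \ref{assexput}.

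First I would record two elementary facts about $K_\bfy$. It is nonempty, since $\pi(\R^N)=\R^M$ by Remark \ref{imageofRNisRM} (a consequence of Assumption \ref{mass: F, pi, C}(a)), and it is a closed affine subspace by linearity of $\pi$. Moreover, for every $\bfx\in K_\bfy$ the hypothesis $\sum_{m=1}^M\pi^m(\bfx)=\sum_{j=1}^N x^j$ gives $\sum_{j=1}^N x^j=\sum_{m=1}^M\pi^m(\bfx)=\sum_{m=1}^M y^m=:c$, a constant on $K_\bfy$. Writing $s(\bfx):=\sum_{j=1}^N(x^j)^-$ for the total negative part, this constraint yields $\sum_{j=1}^N(x^j)^+=c+s(\bfx)$, hence $\|\bfx\|_1=\sum_{j=1}^N|x^j|=c+2s(\bfx)$; in particular $\|\bfx\|\to\infty$ within $K_\bfy$ forces $s(\bfx)\to\infty$.

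Next I would feed in well-controlledness. Fix the witnessing multivariate Orlicz function $\widehat\Phi$ and the function $h$, and let $A>0,B$ be the constants in the Orlicz lower bound $\widehat\Phi(\bfz)\ge A\sum_j z^j-B$ on $[0,+\infty)^N$. Applying this with $\bfz=(\bfx)^-$ and combining it with $U(\bfx)\le-\widehat\Phi((\bfx)^-)+\varepsilon\sum_j|x^j|+h(\varepsilon)$ gives, for any $\varepsilon>0$ and any $\bfx\in K_\bfy$,
\[
U(\bfx)\le -A\,s(\bfx)+B+\varepsilon\big(c+2s(\bfx)\big)+h(\varepsilon)=(2\varepsilon-A)\,s(\bfx)+\varepsilon c+B+h(\varepsilon).
\]
Choosing once and for all $\varepsilon\in(0,A/2)$ makes the coefficient $2\varepsilon-A$ strictly negative, so $U(\bfx)\to-\infty$, i.e. $f(\bfx)\to+\infty$, as $\|\bfx\|\to\infty$ in $K_\bfy$. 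This is the coercivity claimed above, and the argument concludes as indicated.

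I expect the only genuinely delicate point to be this sign bookkeeping: the Orlicz term controls only the negative parts of $\bfx$, whereas the term $\varepsilon\sum_j|x^j|$ also involves the positive parts, and it is precisely the constraint $\sum_j x^j=c$ available on $K_\bfy$ that lets one re-express $\sum_j(x^j)^+$ in terms of $s(\bfx)$ and then absorb the positive-part contribution, provided $\varepsilon$ is taken below $A/2$. (As a byproduct, the argument actually produces an optimum for \emph{every} $\bfy\in\R^M$, although Assumption \ref{assexput} only requires one.)
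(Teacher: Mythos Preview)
Your argument is correct and follows essentially the same route as the paper's: fix the affine constraint $\sum_j x^j=c$, derive from well-controlledness a linear upper bound on $U$ in terms of the total negative part $s(\bfx)$ with strictly negative slope, and conclude that any maximizing sequence is bounded and hence subconverges to an optimum. The only cosmetic differences are that the paper works at the single point $\bfy=\bfzero$ (which already suffices for Assumption~\ref{assexput}) and obtains the key inequality $U(\bfx)\le a\sum_j(x^j)^+-2a\sum_j(x^j)^-+b$ by citing an external lemma, whereas you derive the analogous bound directly from the Orlicz lower bound and the definition of well-controlled; your version is thus slightly more self-contained and, as you note, actually yields an optimum for every $\bfy\in\R^M$.
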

\begin{proof}
    Take $\bfz=\bfzero\in\R^M$ and take a maximizing sequence $(\bfx_n)_n$ for $\square_{\pi} U(\bfzero)$, w.l.o.g. assuming that $U(\bfx_n)\geq \square_{\pi} U(\bfzero)-1$ for every $n$. By \cite{DoldiFrittelli2022} Lemma 3.5.(iv) we have for some $a>0,b\in\R$ that
    \begin{align*}
     &\square_{\pi} U(\bfzero)-1\leq   U(\bfx_n)\leq  a\sum_{j=1}^N(x^j_n)^+-2a\sum_{j=1}^N(x^j_n)^-+b\\
     &=a\sum_{j=1}^Nx^j_n-a\sum_{j=1}^N(x^j_n)^-+b=a\sum_{m=1}^M\pi^m(\bfx_n)-a\sum_{j=1}^N(x^j_n)^-+b=a\sum_{m=1}^Mz^m+b-a\sum_{j=1}^N(x^j_n)^-
    \end{align*}
It follows that $\sum_{j=1}^N(x^j_n)^-$ needs to be bounded, and since also $\sum_{j=1}^N(x^j_n)^+=\sum_{j=1}^Nx^j_n+\sum_{j=1}^N(x^j_n)^-=\sum_{m=1}^Mz^m+\sum_{j=1}^N(x^j_n)^-$ the same holds for $\sum_{j=1}^N(x^j_n)^+$. Thus $(\bfx_n)_n$ is bounded in $\R^N$. Passing to a subsequence converging to some $\bfx_\infty\in\R^N$ we get by continuity of $\pi$ (which is linear on $\R^N$ and takes values in $\R^M$ by hypothesis) that $\pi(\bfx_\infty)=\bfz$, and since $U$ is continuous on $\R^N$ (by \cite{Ro70} Theorem 10.4 applied to $f=-U$, since it is finite-valued on the whole $\R^N$ by assumption) we have $U(\bfx_\infty)=\lim_nU(\bfx_n)=\square_{\pi} U(\bfzero)$. This proves the optimality of $\bfx_\infty$.
\end{proof}

%\nocite{BFFMB19} \nocite{BFFMB20} \nocite{Delbaen21}  \nocite{OverbeclSchindler21} \nocite{BurzoniDoldiCompagnoni23}\nocite{FeinsteinRudloffWeber}
\bibliographystyle{abbrv}
\bibliography{BibAll}

\end{document}